\theoremstyle{plain}
\newtheorem{theorem}{Theorem}        % Theorem 1, 2, 3, …
\newtheorem{proposition}{Proposition}% Proposition 1, 2, 3, …
\theoremstyle{definition}
\newtheorem{remark}{Remark}
\theoremstyle{definition}
\newtheorem{definition}{Definition}  % Definition 1, 2, 3, …
\newcommand{\ket}[1]{{| #1 \rangle}}
\newcommand{\bra}[1]{\langle #1 |}
\title{Heisenberg-Limited Quantum Hamiltonian Learning \\ via Randomly Spread Product-States}
\author[1,2]{Bora Baran\thanks{\texttt{b.baran@fz-juelich.de}\;}\;\;}
\author[1,3]{Timothy Heightman\thanks{\texttt{timothy.heightman@icfo.eu}\;}\;\;}
\affil[1]{\textit{ICFO - The Institute of Photonic Sciences, 08860 Castelldefels, Barcelona, Spain}}
\affil[2]{Forschungszentrum Jülich, Institute of Quantum Control, Peter Grünberg Institut (PGI-8), 52425 Jülich, Germany}
\affil[3]{Quside Technologies SL, Carrer d’Esteve Terradas, 1, 08860 Castelldefels, Barcelona, Spain}
\begin{document}

\maketitle

\begin{abstract}
We show how the Heisenberg-limited quadratic Fisher-information regime of short-time quantum evolution can be made practically accessible for quantum Hamiltonian learning, using only local quantum operations. Our protocol uses experiments initialized in locally Haar-random product states, accompanied by random one-shot Pauli-product measurements, leading to the activation of the full Hamiltonian spectrum in the measurement statistics. This extends the naturally given quadratic Fisher scaling of short-time dynamics into a practically accessible temporal window without requiring entanglement, globally coherent measurements, or dynamical control. Furthermore, we show that the act of ensemble averaging over these initial states makes unbiased estimation data, meaning all Hamiltonian parameters can be simultaneously estimated from the same data-set, removing the need for parameter isolation. We supplement the theoretical results by showing empirically that, even away from the asymptotic limit, one can surpass the SQL using randomly spread product-state ensembles. We do so numerically by learning a selection of different disordered multi-qubit Hamiltonians in a black-box learning scenario.

% We show how one can asymptotically reach the Heisenberg limit in quantum Hamiltonian learning without entanglement, globally coherent measurements, or dynamical control, using only local quantum operations. Our protocol uses ensemble-averaging over the outcomes of experiments initialized in Haar-random local product states, accompanied by random Pauli measurements, leading to the effective cancellation of interference terms so that a Heisenberg-limited regime emerges for short-time experiments.
% Furthermore, we show that the act of ensemble averaging makes unbiased estimation data, meaning all Hamiltonian parameters can be simultaneously estimated from the same data-set, removing the need for parameter isolation. We supplement the theoretical results by showing empirically that, even away from the asymptotic limit, one can surpass the SQL using randomly spread product-state ensembles. We do so numerically by learning a
% selection of different disordered multi-qubit Hamiltonians in a black-box learning scenario. 
\end{abstract}

%%We supplement these results empirically, learning a selection of different disordered multi-qubit Hamiltonians in a black-box learning scenario, surpassing the Standard Quantum Limit even away from the asymptotic limit.

%\bora{there is no shot efficiency, we do ensembles, the saved number of shots reappears in the 100 initial states times 50 measurements. the real and only advantage and thing that makes it suitable for experiment is using only local initial states and measurements}{}

\section*{Introduction}
\label{sec:introduction}

Hamiltonian Learning (HL) is a form of quantum process tomography, and is a crucial subroutine in quantum information processing. It aims to characterise an unknown system's unitary dynamics, with applications in quantum simulation, control, sensing, metrology and error mitigation in quantum computing \cite{zhang2017quantum, hauke2012can, Shi300qubits, WeibeHLCert, valenti2019hamiltonian, guo2025mitigating, baumgratz2016quantum, ferrie2013best, singh2023compensating, sergeevich2011characterization, endo2019mitigating, fallani2022learning, strikis2021learning, suzuki2022quantum, yang2020probe, kobrin2024universal, imai2025metrological, peng2022deep}.  Characterizing a quantum mechanical system's dynamics typically involves learning the parameters of the system's Hamiltonian from experimental data \cite{bairey2019learning, evans2019scalable, franca2022efficient, wilde2022scalably, gu2022practical, heightman2024solving, dutkiewicz2023advantage, hu2025ansatz, ma2024learning, barthe2025quantum, leng2025quantum, zhao2023maximum, yu2023robust, hangleiter2024robustly, rattacaso2023high, wang2015hamiltonian, li2020hamiltonian, granade2012robust, hincks2018hamiltonian, dutt2023active, ferrie2012adaptive, olsacher2025hamiltonian, bairey2020learning, chen2022learning, choi2022learning, kang2025enhanced,li2024heisenberg, mirani2024learning, zubida2021optimal, carrasco2021theoretical, rouze2024learning, anshu2020sample, hu2025demonstration, haah2022optimal, valenti2022scalable}.

The goal in HL research is to develop an efficient strategy to collect the maximal amount of information about the Hamiltonian using minimal resources, such as data or experimental runtime \cite{dutt2023active, dutkiewicz2023advantage}. Current strategies for Hamiltonian learning span a wide range of methods, including gradient-based optimization, as well as methods specifically tailored for robustness under noise to recover Hamiltonians from dynamical data \cite{wilde2022scalably, leng2025quantum, zhao2023maximum, yu2023robust, hangleiter2024robustly, rattacaso2023high}. Other works also reconstruct Hamiltonians by analysing specifically the many-body dynamics generated by interaction terms\cite{wang2015hamiltonian, bairey2019learning,li2020hamiltonian, gu2022practical} or by employing Bayesian optimization techniques \cite{granade2012robust, evans2019scalable, hincks2018hamiltonian, ferrie2012adaptive, olsacher2025hamiltonian}. Hybrid unitary-dissipative systems have also been investigated to capture the interplay between coherent and dissipative dynamics \cite{bairey2020learning, franca2022efficient, wiebe2014quantum, olsacher2025hamiltonian, chen2022learning}. Some frameworks have also integrated Neural Differential Equations to combine the expressiveness of neural networks with the underlying physics of Hamiltonians \cite{heightman2024solving, kang2025enhanced}. %Given the diversity of approaches, we might wonder which ones exhibit optimal Heisenberg limit scaling, and which are standard quantum limited, or worse.

The highest achievable efficiency with respect to experimental resources is given by the so-called Heisenberg limit, while most strategies are typically bounded by the Standard Quantum Limit (SQL) \cite{Giovannetti_2006}. In this work we consider the interrogation time $t$ as the primary experimental resource. In scenarios where multiple interrogation times are required, for example when learning dynamical trajectories, we instead consider the total experiment time $T$, defined as the sum of the interrogation times applied to the same experimental configuration (same initial state and measurement basis), i.e.\ $T=\sum_i t_i$.Since our focus is on the scaling with respect to this time resource, we adopt a resource-accounting perspective that emphasizes the dependence on the interrogation time, while other resources are treated as linear since they are not the target of optimization in this work. Similar perspectives are used in other works with this focus, for example in \cite{dutkiewicz2023advantage}. The achievable estimation precision is bounded by the Cramér–Rao inequality $\Delta\theta \ge 1/\sqrt{\mathcal{I}_C}$ \cite{rao1992information}. In terms of the chosen resource, in our case the total experimental time $T_{\rm tot}$, the Heisenberg limit states that the estimation error $\Delta\theta$ of a Hamiltonian parameter $\theta$ can scale at best as $\Delta\theta=\mathcal{O}(T_{\rm tot}^{-1})$. In contrast, the Standard Quantum Limit predicts scaling $\Delta\theta=\mathcal{O}(T_{\rm tot}^{-1/2})$ \cite{nielsen2010quantum,Giovannetti_2006}.

Most existing methods that reach optimality with total experiment time rely in some degree on  dynamical control, either continuos or interleaved \cite{dutkiewicz2023advantage, hu2025ansatz, Huang2023Heisenberglimit, ni2024quantum, li2024heisenberg, dutt2023active}. Indeed Dutkiewicz  et. al \cite{dutkiewicz2023advantage} proved that achieving Heisenberg-limited learning (with respect to total experiment runtime) for quantum many-body Hamiltonians necessitates a continuous, high-fidelity control capability that scales with the desired precision \cite{dutkiewicz2023advantage}. However, implementing such continuous quantum many-body control fields remains a significant challenge \cite{luchnikov2024controlling}. Simulating and optimizing multi-qubit control pulses for interacting systems requires computational resources that grow exponentially with system size, and each fine-tuning iteration demands repeated, computational cost \cite{luchnikov2024controlling}. Moreover, the control sequences must be faster than the system's intrinsic coherence time, unless coherence times are correspondingly extended \cite{viola1998dynamical}. Another way to achieve optimality is through coherent, entangled measurements, which exploit quantum correlations to reduce the number of measurements required for state characterisation \cite{Giovanetti2004HeisenberglimitEntanglement}, bringing down the total experimental time. However, in practice coherent, entangled measurements remain challenging, since entangled states are fragile and readily destroyed by decoherence, demanding precise isolation, and also again large-scale control \cite{Giovanetti2004HeisenberglimitEntanglement,schlosshauer2004decoherence}.

In this work, we demonstrate how the naturally occurring quadratic short-time Fisher-information regime in quantum Hamiltonian learning can be made practically accessible while using only separable probe states, local measurements, and no dynamical control. Our approach relies on a formulation of the learning problem in which each experiment is initialized in a locally randomized product state, specifically \emph{spread states}, and measured with an ensemble of random Pauli-product bases. Operationally, this replaces repeated shots of identical experiments by ensembles of randomized probe-state and measurement configurations in order to improve the efficiency with respect to the time resource. Spread states activate the Hamiltonian eigenbasis and prevent state-induced suppression of spectral gaps, while randomized Pauli measurements reveal these spectral components in the measurement statistics. We show analytically that this combination activates the relevant spectral gaps in the measurement dynamics, allowing the Fisher information to exhibit quadratic short-time scaling over a finite and experimentally accessible temporal window. While quadratic Fisher scaling generically arises only in the infinitesimal limit $t \to 0$, spread-state preparation extends this regime to practical evolution times. Within this window the Fisher information grows as $\mathcal I(t)=\Omega(t^2)$, corresponding to Heisenberg-limit-consistent sensitivity with respect to the evolution time.\\
Remarkably, this regime can be made practically accessible using only local quantum operations applied at the state-preparation stage together with classical averaging over independently prepared probe configurations; once prepared, the system evolves under the Hamiltonian without dynamical intervention. In particular, recent no-go theorems show that for ETH Hamiltonians, achieving Heisenberg scaling in sequential estimation protocols that repeatedly probe a single experimental configuration requires continuous adaptive control~\cite{dutkiewicz2023advantage}. In that setting, control operations are applied during the evolution in order to actively modify the dynamics, progressively refining the effective Hamiltonian and rescaling the evolution time. By contrast, the protocol studied here does not modify the dynamics through control. Instead, it exploits the quadratic sensitivity that already arises naturally in short-time quantum evolution and extends the regime in which this scaling remains observable.
Secondly, we show that ensemble averaging over locally randomized product states also simplifies multiparameter estimation. As the ensemble size grows, the measurement sensitivities for each parameter become asymptotically independent (i.e., the Fisher information matrix becomes asymptotically diagonal), enabling all Hamiltonian parameters to be estimated in parallel from the same data set. We validate these results numerically for a range of disordered multi-qubit Hamiltonians in a black-box learning setting, confirming the learnability of full Hamiltonian matrices at once at asymptotically Heisenberg-like scaling under experimentally realistic constraints.\\
We further point out that once the learning rate for a single experiment grows super-linearly with the evolution time, the cumulative learning rate obtained from experiments performed at different evolution times does not necessarily exhibit the same scaling with the total experimental time. This is a purely mathematical effect arising from the properties of sums of power laws, which becomes relevant once the single-experiment scaling is non-linear. In particular, when experiments are performed at different evolution times, the cumulative learning rate depends on how the individual evolution times are distributed rather than only on their total duration. We analyse this effect and capture it through varied measurement-time scheduling.

The rest of this work is structured as follows: in Sec.~\ref{sec:theory} we briefly review the Fisher information and Cramér-Rao bound, before analysing the time evolution of the Fisher information for ensembles of spread states. This allows us to prove the asymptotic emergence of Heisenberg-limited scaling on such ensembles, and to prove that the Fisher information indeed diagonalises in the asymptotic limit. Next, Sec.~\ref{sec:numerical_results} describes the numerical experiments we conducted to practically demonstrate our theory away from the asymptotic limit; here we lay out the Hamiltonian models tested and show empirically that they can be learned at a rate better than the SQL. Additionally, we validate the recovery based results by looking at the Fisher information behaviour directly, allowing to exclude artefacts of the employed Hamiltonian recovery and reconstruction procedure.

\section{Theory}
\label{sec:theory}

\subsection{The Fisher information and the Cramér-Rao bound}
\label{sec:fisher_info}
Consider a quantum system governed by a Hamiltonian, whose spectral decomposition in the Pauli basis reads,
\begin{equation}
H(\theta) = \sum_j \theta_j P_j,\quad P_j \in \mathcal{P}_n = \bigl\{\sigma_{i_1}\otimes\sigma_{i_2}\otimes\cdots\otimes\sigma_{i_n}\;\big|\;\sigma_{i_k}\in\{I,X,Y,Z\}\bigr\},
\end{equation}
where \(\theta \in \mathbb{R}^d\) denotes a set of unknown parameters, and \(\mathcal{P}_n\)
is the \(n\)-qubit Pauli group. The primary objective of Hamiltonian learning is to estimate the coefficients \(\theta_j\) through experimental observations and, if necessary, to infer the Hamiltonian's operator structure. Learning scenarios are generally categorized based on whether prior information about the Hamiltonian's structure is available. When this is the case, only the coefficients in a given operator basis need to be tuned, and we arrive at the White-box scenario. On the other hand, without prior knowledge about the Hamiltonians structure, we have the Black-box scenario, where both the Hamiltonians coefficients and structure (often in the Pauli basis) must be uncovered \cite{heightman2024solving}. To understand the conditions in which HL parameters are optimally learnable, we can resort to the Fisher information \cite{petz2011introduction, watrous2018theory}.

The Fisher information tells us the amount of information about the parameter \(\theta\) that can be extracted from observation of a probabilistic observable, and is given by \cite{petz2011introduction},
\begin{equation}
\mathcal{I}(\theta) = \mathbb{E}\left[\left(\frac{\partial}{\partial \theta} \log p(x|\theta)\right)^2\right],
\end{equation}
where \(p(x|\theta)\) is the likelihood function \cite{Fisher_1925} \cite{rao1992information}. In the case of classical measurements, the general Fisher information becomes the classical Fisher information (CFI). The CFI associated with a probability distribution of measurement outcomes is given by,
\begin{equation}
\label{eq:classical_fisher}
F_C^{}(t) =  \sum_j \frac{1}{p_j^{}(t)} \left( \frac{\partial p_j^{}}{\partial \theta}{(t)} \right)^2,
\end{equation}
where \(p_j{}(t) = |\langle m_j | \psi_t^{(b)} \rangle|^2\) denotes the probability of obtaining a measurement outcome \(j\) at time \(t\) for a given initial state \(| \psi_t^{(b)} \rangle\) \cite{nielsen2010quantum,Giovannetti_2006}. The collected information can be used to estimate a governing Hamiltonian's true parameters up to a precision dependent on the amount of Fisher information. A fundamental limit on the precision of an estimator  \(\hat{\theta}\) is given by the Cramér-Rao (CR) bound,
\begin{equation}
\Delta \theta \geq \frac{1}{\sqrt{\mathcal{I}(\theta)}}
\end{equation}
where \(\Delta \theta\) is the estimation error and \(\mathcal{I}(\theta)\) is the Fisher information \cite{rao1992information}.

Hence, at the level of Fisher information, the SQL scaling of estimator in the parameter $\theta$ is linear, $F \propto t$, since this gives (by the CR bound) \(\Delta\theta = \mathcal{O}(t^{-1/2})\). In contrast, Heisenberg limited scaling at the level of Fisher information requires $F \propto t^2$, so that by the CR bound, we get \(\Delta\theta = \mathcal{O}(t^{-1})\). Let us now analyse the time evolution of the Fisher information under local randomness. In the following we will make use of the Bachmann-Landau notation to describe the asymptotic scaling behaviour of functions (see Appendix~\ref{app:landau_notation}).

\subsection{Time Evolution of Fisher Information under Local Randomness}
\label{sec:random_fisher_scaling}

 In this section we show how employing Haar-random local product
states and random Pauli product measurements extends the naturally
occurring quadratic scaling of the classical Fisher information
(Eq.~\ref{eq:classical_fisher}) at short times into a practically long
transient window.

For each measurement outcome \(j\), the probability admits a
short-time expansion
\begin{equation}
p_j(t,\theta)
=
p_j(0,\theta)
+
a_j(\theta)\,t
+
O(t^2),
\qquad t\to0 .
\end{equation}
Since the probe state and measurement do not depend on the Hamiltonian
parameters, the initial probabilities \(p_j(0,\theta)\) are independent
of \(\theta\), so that
\(
\partial_\theta p_j(0,\theta)=0
\).
Differentiating the expansion with respect to \(\theta\) therefore
yields
\begin{equation}
\partial_\theta p_j(t,\theta)
=
(\partial_\theta a_j(\theta))\,t
+
O(t^2),
\qquad t\to0 .
\end{equation}
If \(\partial_\theta a_j(\theta)\neq0\) for at least one outcome \(j\),
that is, if the chosen probe state and measurement activate a
non-vanishing parameter-sensitive contribution, then
\begin{equation}
|\partial_\theta p_j(t,\theta)| \ge c\,t = \Omega(t),
\qquad t\to0
\end{equation}
for some constant \(c>0\).\\
Using \(0<p_j(t,\theta)\le1\), the classical Fisher information
\eqref{eq:classical_fisher} satisfies
\begin{equation}
\mathcal I_C(t)
=
\sum_j
\frac{(\partial_\theta p_j(t,\theta))^2}{p_j(t,\theta)}
\ge
\frac{(\partial_\theta p_j(t,\theta))^2}{p_j(t,\theta)}
\ge
(\partial_\theta p_j(t,\theta))^2
=
\Omega(t^2),
\qquad t\to0 .
\end{equation}

To show how this quadratic scaling can persist over a longer and practical
temporal window, we examine the structure of the
measurement probabilities. Working in the Hamiltonian eigenbasis
\(H\ket{\lambda_k}=\lambda_k\ket{\lambda_k}\), the probability of
measurement outcome \(j\) can be written as
\begin{equation}
p_j(t)
=
\sum_{k,\ell}
c_{jk}^*c_{j\ell}\,
a_k a_\ell^*\,
e^{-i(\lambda_k-\lambda_\ell)t},
\label{eq:prob_sum_expansion}
\end{equation}
where \(a_k=\langle\lambda_k|\psi\rangle\) are the state amplitudes and \(c_{jk}=\langle\lambda_k|m_j\rangle\) denote the overlaps between the measurement basis and the Hamiltonian eigenbasis. Equation \eqref{eq:prob_sum_expansion} shows that the measurement dynamics is a superposition of oscillations with frequencies given by the spectral gaps \((\lambda_k-\lambda_\ell)\).  The oscillatory contributions in Eq.~\eqref{eq:prob_sum_expansion} are weighted by the coefficients \(c_{jk}^*c_{j\ell}a_k a_\ell^*\), so spectral components may be suppressed either by the state amplitudes \(a_k\) or by the measurement overlaps \(c_{jk}\). 
\begin{theorem}[Spectral window for quadratic Fisher scaling]
\label{thm:fisher_window}
If the coefficients \(c_{jk}^*c_{j\ell}a_k a_\ell^*\) activate all
relevant spectral gaps \((\lambda_k-\lambda_\ell)\), i.e.\
\(c_{jk}^*c_{j\ell}a_k a_\ell^* \neq 0\) for the corresponding pairs
\((k,\ell)\), then the classical Fisher information obeys
\[
\mathcal I_C(t)=\Omega(t^2)
\]
in the temporal window
\[
t \in
\left[
0,
\frac{\pi}{2\,\Delta\lambda_{\max}}
\right],
\]
where
\(
\Delta\lambda_{\max}=\max_{k\neq\ell}|\lambda_k-\lambda_\ell|
\)
is the largest nonzero spectral gaps.
\end{theorem}
\begin{proof}
    See 
    Appendix~\ref{app:spectral_fisher_window}.
\end{proof}

However, the spectral window only governs the observable dynamics if the corresponding spectral gaps are actually activated in the measurement signal. Structured initial states can therefore prevent parts of the Hamiltonian spectrum from contributing to the dynamics. To overcome this limitation we introduce \emph{spread states}. 

\begin{definition}[Spread State]
\label{dfn:state_spreading}
Let \(\ket{\psi_0}\) be any fixed \(n\)-qubit basis state of the Hamiltonian of interest. For each qubit \(j\), draw Euler angles
\((\xi_j,\chi_j,\phi_j)\) i.i.d.\ from the \(SU(2)\) Haar measure and define
\begin{equation}
U_{\rm spread}
=
\bigotimes_{i=1}^n
R_z(\xi_j)R_y(\chi_j)R_z(\phi_j),
\qquad
\ket{\psi_{\rm spread}} = U_{\rm spread}\ket{\psi_0}.
\end{equation}
\end{definition}

These states are similar to classical shadows used in quantum state tomography, however notice that we are performing one-local Haar random rotations \textit{before} the Hamiltonian action. As a consequence spread states remove the state-induced suppression of oscillatory terms in the measurement probabilities. By Lemma~\ref{lem:generic_spectral_activation}, the expansion of a spread state in the Hamiltonian eigenbasis generically contains nonzero amplitudes for all eigenstates,
\begin{equation}
\ket{\psi_{\rm spread}}
=
\sum_k a_k\ket{\lambda_k},
\qquad
\mathbb{P}_{\{U_r\}}\!\left[a_k\neq0\ \forall k\right]=1 .
\label{eq:spread_state_expansion}
\end{equation}
Because the spread-state construction applies independent Haar-random rotations to each qubit, the resulting randomization acts locally and therefore does not deteriorate with increasing system size.

The remaining filtering of spectral components is therefore determined by the measurement overlaps \(c_{jk}\). For a fixed measurement basis, some overlaps may still vanish, suppressing particular terms \(c_{jk}^*c_{j\ell}a_k a_\ell^*\) in Eq.~\eqref{eq:prob_sum_expansion}. To mitigate this effect we sample an ensemble of random local Pauli product measurement bases. Different measurement realizations generically produce different overlap structures \(c_{jk}\), so that across the ensemble a larger set of spectral frequencies \((\lambda_k-\lambda_\ell)\) becomes visible in the measurement signal.

In principle, a single spread state already activates the Hamiltonian eigenbasis with probability one, so that the quadratic short-time scaling of the Fisher information can be accessed using an ensemble of measurement bases alone. Sampling multiple spread states is therefore not required for the existence of the quadratic regime, but improves the typical spectral weights and hence the visibility of the effect.

Thus we have shown that spread states (Theorem~1) extend the naturally occurring quadratic scaling of the Fisher information beyond the infinitesimal-time regime into a finite transient window. This occurs because spread states effectively activate the relevant spectral components of the Hamiltonian. Consequently, within this transient window one can achieve Heisenberg-limited learning, since quadratic scaling of the Fisher information corresponds directly to Heisenberg-limit precision via the Cramér–Rao bound.

\subsection{Fisher Information Diagonalization}
\label{sec:fisher_isotropy}
When learning a Hamiltonian with multiple components \( \theta = (\theta_1, \dots, \theta_d) \), different components can simultaneously influence the measurement outcome in correlated ways. In such cases, adjusting or estimating one parameter effectively depends on the values of other parameter. Hence, many HL strategies must be carefully tailored to isolate or selectively probe each parameter one by one, which usually requires some kind of structural prior such as locality sparsity, commutativity or optimal control \cite{Huang2023Heisenberglimit, yu2023robust, dutkiewicz2023advantage}. Ideally, one would like a learning strategy where all parameters can be estimated independently and simultaneously. This is possible due to the randomness of state spreading effectively diagonalising the Fisher information, as detailed in this section.

Another key consequences of using spread states is that they produce measurement statistics that are sensitive to all components of the Hamiltonian at leading order in time, regardless of structure. This follows from the fact that independently sampled local Haar unitaries almost surely generate non-vanishing overlaps with all Pauli terms. Formally, for a product state
\(
\ket{\psi_0} = \bigotimes_{j=1}^n U_j \ket{\phi_j},
\)
where each \(U_j \in \mathrm{SU}(2)\) is independently drawn from the Haar measure, the short-time measurement probability  for a pauli-product basis outcome, i.e., a bitstring $b \in \{0,1\}^n$,
\begin{equation}
p(b, t) \approx |\langle b | \psi_0 \rangle|^2 - 2t\, \mathrm{Im} \!\left[ \sum_a \theta_a \langle b | \psi_0 \rangle^* \langle b | P_a | \psi_0 \rangle \right]
\end{equation}
where, with probability approaching unity,
\(\langle b | \psi_0 \rangle \neq 0\) and \(\langle b | P_a | \psi_0 \rangle \neq 0\) for all \(a\), ensuring that no term $P_a$, in the Hamiltonian \( H = \sum_a \theta_a P_a \), is missed at first order in time. Hence \(p(b, t)\) is generically sensitive to every Hamiltonian parameter \(\theta_a\) at first order in time, as formally derived in Lemma~\ref{lemma:generic_sensitivity}.

Nevertheless, each individual spread initial state can still produce parameter interdependence in the multi-parameter Fisher information (also given as the Fisher information matrix \cite{liu2020quantum}: 
\(\mathcal{I}_{jk} = \sum_i p_i^{-1} (\partial_{\theta_j} p_i)(\partial_{\theta_k} p_i)\)), due to cross terms of the form
\begin{equation}
    \frac{1}{p_i} \frac{\partial p_i}{\partial \theta_j} \frac{\partial p_i}{\partial \theta_k} \neq 0 \quad \text{for } j \neq k,
\end{equation}
as discussed in the general theory of quantum multiparameter estimation~\cite{liu2020quantum}.\\

\noindent
But, these correlations vanish when averaging over an ensemble of spread states. This follows from the isotropy of the local Haar measure, which ensures that expectation values of distinct Pauli strings are statistically independent with zero mean but finite variance. As a result, the cross-correlations between different parameter derivatives vanish in expectation,  while the finite variances of individual terms preserve the diagonal entries.  Consequently, the ensemble-averaged Fisher information matrix becomes diagonal,  with uncorrelated yet non-vanishing sensitivities in the limit of many random instances. This is formalized in the following theorem:

\begin{theorem}[Fisher Information Diagonalization]
\label{thm:fisher_diagonalization}
Let \( H(\theta) \) be a  $k$-local Hamiltonian parametrized by $d$ real parameters \( \theta = (\theta_1, \dots, \theta_d) \), so that \( H(\boldsymbol{\theta}) = \sum_j \theta_j P_j \), and let each experiment prepare an initial state via local Haar-random unitaries applied to a fixed product state, with measurements in a fixed Pauli product basis. Let \( \mathcal{I}_r(\theta) \) be the Fisher information matrix resulting from the \( r \)-th experiment, for \( r = 1, \dots, R \).  At leading order in time, the ensemble-averaged Fisher information matrix converges to a diagonal form,
\begin{equation}
\lim_{R \to \infty} \left( \frac{1}{R} \sum_{r=1}^R \mathcal{I}_r(\theta) \right) = \mathrm{diag}(c_1, \dots, c_d) \quad \text{with } c_j > 0.
\end{equation}
\end{theorem}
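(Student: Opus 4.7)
The plan is to combine the linearised probability from Proposition~\ref{prop:generic_sensitivity} with the product structure of spread states and Pauli product bases, reducing the ensemble expectation to site-wise single-qubit Haar integrals that enforce Pauli orthogonality. From the expansion $p(b,t) \approx |\langle b|\psi_0\rangle|^2 - 2t\,\operatorname{Im}\sum_a \theta_a \langle b|\psi_0\rangle^*\langle b|P_a|\psi_0\rangle$, differentiation gives $\partial p/\partial\theta_j = -2t\,\operatorname{Im}\langle\psi_0|\Pi_b P_j|\psi_0\rangle$ with $\Pi_b = |b\rangle\!\langle b|$. Substituting into the Fisher matrix definition and replacing the denominator by its zeroth-order value $p_0(b) = |\langle b|\psi_0\rangle|^2$ yields
\begin{equation*}
\mathcal{I}_{jk}(\theta) = 4t^2 \sum_b \frac{\operatorname{Im}\langle\psi_0|\Pi_b P_j|\psi_0\rangle\,\operatorname{Im}\langle\psi_0|\Pi_b P_k|\psi_0\rangle}{p_0(b)} + O(t^3).
\end{equation*}
The unitarity bound $|\langle b|P_j|\psi_0\rangle| \le 1$ combined with $|\langle\psi_0|b\rangle|^2 = p_0(b)$ shows that each summand is bounded, so the spread-ensemble expectation is well defined.

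I would then exploit the tensor-product structure $|\psi_0\rangle = \bigotimes_l |\xi_l\rangle$, where each $|\xi_l\rangle$ is single-qubit Haar by Definition~\ref{dfn:state_spreading}, and $|b\rangle = \bigotimes_l |b_l\rangle$. Writing $P_j = \bigotimes_l \sigma_{j,l}$ and $\alpha_l = \langle b_l|\xi_l\rangle$, the amplitude factorises as $\langle\psi_0|\Pi_b P_j|\psi_0\rangle = \prod_l \alpha_l^*\langle b_l|\sigma_{j,l}|\xi_l\rangle$ while $p_0(b) = \prod_l |\alpha_l|^2$. The integrand therefore splits site by site, and the independent single-qubit Haar averages can be carried out separately.

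For any $j \neq k$ the strings $P_j,P_k$ differ at some site $m$ with $\sigma_{j,m} \neq \sigma_{k,m}$. At that site, a single-qubit Haar integral of a bilinear form in two distinct Paulis vanishes by the Hilbert--Schmidt orthogonality $\operatorname{Tr}(\sigma_{j,m}\sigma_{k,m}) = 2\delta_{\sigma_{j,m},\sigma_{k,m}}$, equivalently by the residual azimuthal phase on the Bloch sphere averaging to zero over $[0,2\pi)$. The remaining site integrals contribute only finite constants, giving $\mathbb{E}[\mathcal{I}_{jk}] = 0$ for $j \neq k$. For $j = k$, each site contributes a non-negative Haar integral, strictly positive whenever the Pauli factor fails to commute with the local projector $\Pi_{b_m}$; for generic (or suitably chosen) measurement bases every non-identity Pauli string has at least one such site, giving $c_j > 0$. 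Finite-sample convergence to this diagonal limit then follows from the law of large numbers applied to the $R$ i.i.d.\ realisations.

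The main obstacle is handling the random denominator $p_0(b)$, which blocks a direct polynomial Weingarten computation even after factorisation. My approach would be to parametrise each single-qubit state $|\xi_l\rangle$ in Bloch angles $(\chi_l,\phi_l)$, with Haar measure $\sin\chi_l\,d\chi_l\,d\phi_l/(4\pi)$, so that $1/p_0(b)$ becomes a trigonometric function with integrable singularities. The azimuthal integrals can then be performed first: the Pauli mismatch at site $m$ leaves an unbalanced phase factor $e^{i\Delta\phi_m}$ whose integral over $[0,2\pi)$ is zero, while the remaining polar integrals reduce to bounded one-dimensional expressions. This phase-cancellation mechanism is the cleanest route I see to rigorously establish the off-diagonal vanishing in the presence of the random denominator.
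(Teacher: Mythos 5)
Your proposal follows the same overall architecture as the paper's proof: a short-time expansion of the outcome probabilities, first-order-in-$t$ derivatives, an ensemble average that annihilates the off-diagonal Fisher entries via one-local Haar randomness, strictly positive diagonal entries, and the law of large numbers for the $R\to\infty$ limit. Where you genuinely diverge is in the mechanism for the off-diagonal cancellation. The paper expands the commutators $[H_j,\Pi_i]$ in Pauli strings $Q_\alpha$ and invokes $\mathbb{E}_\psi[\langle\psi|Q_\alpha|\psi\rangle\langle\psi|Q_\beta|\psi\rangle]=0$ for $Q_\alpha\neq Q_\beta$, silently leaving the random weight $1/p_i^{(r)}$ outside the argument; you instead factorize the amplitudes site by site, cancel the denominator against the $|\alpha_l|^2$ factors, and reduce everything to single-qubit integrals. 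Your explicit treatment of the random denominator is a real improvement over the paper, which never addresses it.

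One step deserves tightening. After the cancellation, the site-$m$ factor of the term $\operatorname{Re}(u\bar v/p_0)$ is $\mathbb{E}_{\xi_m}[\langle\xi_m|\sigma_{k,m}|b_m\rangle\langle b_m|\sigma_{j,m}|\xi_m\rangle]=\tfrac12\langle b_m|\sigma_{j,m}\sigma_{k,m}|b_m\rangle$, which does \emph{not} vanish for a fixed outcome $b_m$ when $\sigma_{j,m}\neq\sigma_{k,m}$ (e.g.\ $\sigma_j=I$, $\sigma_k=Z$ with a $Z$-basis measurement gives $\pm\tfrac12$; $\sigma_j=X$, $\sigma_k=Y$ gives $\pm\tfrac{i}{2}$). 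The Hilbert--Schmidt orthogonality $\operatorname{Tr}(\sigma_{j,m}\sigma_{k,m})=2\delta_{\sigma_{j,m},\sigma_{k,m}}$ you correctly cite only materializes after summing over both outcomes $b_m=\pm$ at the differing site, i.e.\ it is the outcome sum $\sum_b$, combined with the Haar average, that kills the off-diagonal term --- not the single-site Haar integral alone at fixed $b_m$. The azimuthal-phase argument does handle the other cross term $\operatorname{Re}(uv/p_0)$, where the residual phase $e^{2i\arg\alpha_m}$ genuinely integrates to zero, but the two mechanisms should be kept distinct. With that repair your route goes through and is, if anything, more rigorous than the published one.
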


\begin{proof}
    See 
    Appendix~\ref{lemma:isotropy_spreading}.
\end{proof}

The Fisher-information diagonalization described above does not rely on the Hilbert-space dimension being small. 
For $k$-local Hamiltonians with bounded connectivity, the commutator structure entering the Fisher information involves only $O(1)$ local operator contributions. Consequently, the ensemble-averaged Fisher matrix and the diagonalization mechanism remain independent of the Hilbert-space dimension. The locality-based argument establishing this property is provided in Appendix~\ref{lemma:isotropy_spreading}.

Hence, using ensembles of spread states removes the need to design separate experiments for each Hamiltonian term, and therefore allows all Hamiltonian components to be learned simultaneously from the same dataset, without needing to isolate or separate them one at a time. This effect enables to learn all parameters together from one dataset within the Heisenberg-limited scaling window, also enabled by spread state ensembles as shown in the previous section.

\subsection{Degradation of non-Linear Learning Rates in Multiple-Timestamp Scenarios}
\label{sec:quadratic_scheduling}

In Hamiltonian learning based on trajectories, data can be collected at multiple time points\cite{dutkiewicz2023advantage, wilde2022scalably, heightman2024solving}. The total duration of the experiment is then given by 
\begin{equation}
    T_{\mathrm{tot}} = \sum_k t_k,
\end{equation}
and we are interested in how the total Fisher information grows with this total time.

In the Sec.~\ref{sec:random_fisher_scaling}, we showed that a Heisenberg limited regime can emerge with respect to the particular evolution time $t \ll 1$.
However, if the Fisher information for a single evolution time $t$ scales non-linearly, that is,
\begin{equation}
    \mathcal{I}(t) \propto t^{\nu} \quad \text{with } \nu \neq 1,
\end{equation}
then the cumulative Fisher information obtained from multiple measurement times does not necessarily scale with the total experimental duration $T_{\mathrm{tot}}$ in the same way. 
In principle, one might expect the total Fisher information to scale with the total experiment time as
\begin{equation}
    \mathcal{I}_{\mathrm{tot}} \propto T_{\mathrm{tot}}^{\nu} = \bigl(\sum_k t_k\bigr)^{\nu},
\end{equation}
but in practice the Fisher information accumulates additively across measurement times as
\begin{equation}
    \mathcal{I}_{\mathrm{tot}} = \sum_k \mathcal{I}(t_k) \propto \sum_k t_k^{\nu}.
\end{equation}
For $\nu \neq 1$, these two quantities satisfy the inequality
\begin{equation}
    \sum_k t_k^{\nu} \le \left(\sum_k t_k\right)^{\nu},
\end{equation}
with equality only when all but one $t_k$ vanish. 
Hence, the total Fisher information generally scales more slowly with $T_{\mathrm{tot}}$ than one might expect from the single-time behaviour $\mathcal{I}(t) \propto t^{\nu}$.
Nevertheless, by choosing the measurement times appropriately, this discrepancy can be reduced. 
Hence, we introduce a simple family of non-uniform measurement schedules, 
where measurements are performed at times
\begin{equation}
    t_k = \Delta t\, k^{\alpha},
\end{equation}
for some exponent $\alpha > -1$ to be optimised. 
We generally write the Fisher information scaling with time as 
\(\mathcal{F}(t) = \Theta(t^{\gamma_0})\), 
so that \(\gamma_0 = 2\) corresponds to Heisenberg-limited scaling. 
For our non-uniform sampling schedule, we take
\begin{equation}
t_k = \Delta t\,k^\alpha,
\qquad k = 1,\dots,m_t,
\end{equation}
with \(\Delta t>0\) and \(\alpha>-1\). 
This leads us to the following result.
\begin{proposition}[Cumulative Fisher-Information Scaling]
\label{prop:cumulative_scaling}
If \(t_k = \Delta t\,k^\alpha\) for \(k = 1,\dots,m_t\), and each \(t_k\) lies within the regime where the Fisher information obeys 
\(\mathcal{F}(t_k) = \Theta(t_k^{\gamma_0})\) 
(with \(\gamma_0=2\) under Heisenberg-limited scaling), 
then the total Fisher information
\begin{equation}
I_{\rm tot} = \sum_{k=1}^{m_t} \mathcal{F}(t_k),
\end{equation}
scales with the total experimental time 
\(T_{\rm tot} = \sum_{k=1}^{m_t} t_k\) as
\begin{equation}
I_{\rm tot} = \Theta\!\bigl(T_{\rm tot}^{\,p}\bigr),
\qquad
p = \frac{\alpha \gamma_0 + 1}{\alpha + 1} + O(m_t^{-1}),
\end{equation}
where \(p\) is the effective scaling exponent including finite-$m_t$ corrections.
\end{proposition}
\noindent\textit{Proof.} See Appendix~\ref{app:cumulative_fisher_information_scaling}.
With the proposed scheduling, it is possible to tune $\alpha$ (while keeping $\gamma_0 = 2$) so that one can continuously move from uniform sampling with $\alpha = 0$, giving 
\(
I_{\rm tot} = \Theta(T_{\rm tot}),
\)
corresponding to the standard quantum limit, to strongly non-uniform sampling with $\alpha \to \infty$, yielding
\(
I_{\rm tot} = \Theta(T_{\rm tot}^p), \qquad p \to 2.
\)
Thus, we asymptotically recover Heisenberg-limited scaling with respect to the total experimental time as $\alpha \to \infty$.

%%%%%%%%%%%%%%%%%%%%%%%%%%%%%%%%%%%%%%%%%%%%%%%%%%%%%%%%%%%%
%\newpage
\section{Numerical Demonstrations}
\label{sec:numerical_results}
Having established the theoretical results in Sec.~\ref{sec:theory}, we now provide practical demonstrations of the proposed learning protocol using simulated experiments. The theoretical analysis shows that randomized spread states together with randomized Pauli measurements activate the Hamiltonian spectrum and allow the quadratic short-time Fisher-information regime to be accessed in practice. In this section we illustrate how these mechanisms translate into concrete Hamiltonian-learning performance in small-system settings. In particular, we show empirically that the protocol can surpass the SQL and enable simultaneous estimation of all Hamiltonian parameters using only local product states and local measurements.

Our numerical demonstration is given by simulated experiments performed on 5 qubits. The mechanisms analysed in Sec.~\ref{sec:theory} persist independently of the Hilbert-space dimension. In particular, spectral activation arises from locally Haar-random spread states acting independently on each qubit, while the Fisher-information structure underlying the protocol depends only on the locality of the Hamiltonian terms. As a result, the behaviour illustrated in the simulations reflects the protocol mechanism rather than a special property of small system sizes. We detail the Hamiltonian models used in Sec.~\ref{sec:model}, followed by formulating the recovery error of interest and relating it to the total experiment time in Sec.~\ref{sec:recovery_error_and_total}. Sec.~\ref{res:accessing_quadratic_regime} then presents the first experiment, where we fix a time-stamp schedule and vary the spread-state ensemble size. Here we analyse how ensemble averaging improves the simultaneous learning of all Hamiltonian parameters. In the second experiment, presented in Sec.~\ref{sec:numerical_emergent_coherence}, we fix a spread-state ensemble that already yields sub-SQL learning with evolution time $t$ and vary the time-stamp scheduling to study how the learning behaviour depends on the total experiment time $T_{\rm tot} = \sum_k t_k$.

 In addition to the recovery experiments described above, we perform Fisher-information diagnostics in Secs.~\ref{sec:fisher_diagnostics_theory} and \ref{sec:fisher_diagnostics_experiments} to analyse the intrinsic information structure of the measurement protocol independently of the Hamiltonian reconstruction procedure. Because reconstruction performance can be influenced by numerical optimization effects such as convergence behaviour or conditioning of the estimation problem, analysing the Fisher information directly from the simulated measurement probabilities allows us to verify that the quadratic short-time sensitivity regime predicted in Sec.~\ref{sec:random_fisher_scaling} already emerges in the measurement statistics and follows the expected cumulative scaling with the total experiment time. In addition, we analyse the structure of the Fisher information matrix to validate the diagonalization mechanism predicted by Theorem~\ref{thm:fisher_diagonalization}. The theorem establishes that averaging over spread states drives the Fisher matrix towards a diagonal form in the large-ensemble limit. The fisher diagnostics quantifies how this decorrelation emerges for the finite spread-state ensembles used in the experiments by directly at the ratio between diagonal and non-diagonal fisher matrix entries. We accompany these investigations with system size scaling, to also validate how these mechanisms behave as the system size increases, confirming that both the Fisher scaling and the spread-state-induced diagonalization predicted by the theory remain effective independent Hilbert-space dimensions.

\subsection{Hamiltonian Models}
\label{sec:model}

We choose the following three anisotropic, disordered Heisenberg model Hamiltonians, with local transverse fields as ground truth Hamiltonians to be recovered: 
\begin{subequations}
\label{eq:xyz_hamiltonians}
\begin{align}
H_{\mathrm{XYZ}} \label{eq:first_test_ham}
  &= \sum_{i=1}^{N-1} \left(J^x_i X_i X_{i+1} + J^y_i Y_i Y_{i+1} + J^z_i Z_i Z_{i+1}\right)
   + \sum_{i=1}^N h_i\,X_i \\[1.5ex]
H_{\mathrm{XYZ2}} \label{eq:second_test_ham}
  &= \sum_{i=1}^{N-1} \left(J_i^x X_i X_{i+1} + J_i^y Y_i Y_{i+1} + J_i^z Z_i Z_{i+1}\right)
   + \sum_{i=1}^{N} \left(h_i^x X_i + h_i^y Y_i + h_i^z Z_i\right) \nonumber \\
  &\quad + \sum_{i=1}^{N-2} \left(K_i^x X_i X_{i+2} + K_i^y Y_i Y_{i+2} + K_i^z Z_i Z_{i+2}\right) \\[1.5ex]
H_{\mathrm{XYZ3}} \label{eq:third_test_ham}
  &= \sum_{i=1}^{N-1} \left(J_i^x X_i X_{i+1} + J_i^y Y_i Y_{i+1} + J_i^z Z_i Z_{i+1}\right)
   + \sum_{i=1}^{N} \left(h_i^x X_i + h_i^y Y_i + h_i^z Z_i\right) \nonumber \\
  &\quad + \sum_{i=1}^{N-2} \left(K_i^x X_i X_{i+1} X_{i+2} + K_i^y Y_i Y_{i+1} Y_{i+2} + K_i^z Z_i Z_{i+1} Z_{i+2}\right)
\end{align}
\end{subequations}
where each coupling axis \((J^x_{ij}, J^y_{ij}, J^z_{ij})\) is independently disordered, and each qubit experiences a distinct, anisotropic local field. They gradually increase in complexity due to the presence of next-nearest-neighbour interactions in Eq.~(\ref{eq:second_test_ham}), and the third-order coupling in Eq~(\ref{eq:third_test_ham}), given as \((K^x_{ij}, K^y_{ij}, K^z_{ij})\) in both cases.

The coefficients in front of each interaction define the ground-truth parameters to be recovered. 
For each interaction (nearest-neighbor, next-nearest, or three-body), we draw anisotropic couplings as 
\(J^\nu_\mu \sim U(-1,1)\),  where \(U(a,b)\) denotes the uniform distribution on the interval \([a,b]\), with \(\nu \in \{x, y, z\}\) labelling the axis, and \(\mu\) the interacting sites. Gaussian disorder is then added as 
\(J^\nu_\mu \leftarrow J^\nu_\mu + \delta J^\nu_\mu\) with 
\(\delta J^\nu_\mu \sim \mathcal{N}(0,\sigma^2)\) and \(\sigma = 0.1\). 
The same sampling and disorder procedure is applied to the higher-order coupling coefficients \(K^\nu_\mu\). 
For each site \(i\), we additionally draw independent local field components 
\(h_i^\nu \sim U(-1,1)\) for \(\nu \in \{x, y, z\}\), to model fully anisotropic site-specific fields.

% \bora{also added this following part}

In addition to the above, we include a simplified XXZ-type model that serves as a special case to test recovery in the gapless regime. This model contains only nearest-neighbor interactions and no local fields, and its Hamiltonian reads,
\begin{equation}
\label{eq:xxygl_ham}
H_{\text{XXZ}} = \sum_{i=1}^{N-1} \left(X_i X_{i+1} + Y_i Y_{i+1} + \Delta\, Z_i Z_{i+1}\right),
\end{equation}
where \(\Delta \sim U(-0.5, 0.5)\) is the anisotropy parameter, sampled independently for each instance. This model lacks a spectral gap in the range $|\Delta| \leq 1$ \cite{takahashi1999thermodynamics}.

\subsection{Hamiltonian Estimator}
\label{sec:estimator}
To translate measurement data into Hamiltonian estimates, we employ a maximum-likelihood estimator constructed from the measurement statistics generated in the simulated experiments. A detailed description of the dataset generation and implementation is provided in Appendix~\ref{sec:recovery}; here we summarize the estimator used in the numerical demonstrations.

For each ensemble configuration, defined by the spread-state preparation, time-evolution schedule, and random Pauli-product measurements, we collect bit-string outcomes forming a dataset \(D\) of size \(|D| = R J m_t S\). The entries of \(D\) are indexed by \((r,j,k,s)\), corresponding to the spread state, measurement basis, time stamps, and repetition shot number, with measurement outcomes \(b_{rjks}\).

We parametrize a candidate Hamiltonian \(\hat H(\theta)\) by its independent matrix elements.
For a given parameter vector \(\theta\), the measurement probabilities predicted by the model are
\begin{equation}
P\bigl(b_{rjks}\mid \psi^{(r)}_0, p_j, t_k,\theta\bigr)
=
\bigl|\langle b_{rjks}\mid e^{-i\hat H(\theta)t_k}\mid\psi^{(r)}_0\rangle\bigr|^2 .
\end{equation}

The Hamiltonian estimate is obtained by minimizing the negative log-likelihood of the observed measurement outcomes,
\begin{equation}
\mathcal{L}_D(\theta)
=
-\frac{1}{R J m_t S}
\sum_{r,j,k,s}
\log P\bigl(b_{rjks}\mid \psi^{(r)}_0, p_j, t_k,\theta\bigr),
\end{equation}
which yields the Hermitian matrix \(\hat H(\theta)\) that best reproduces the measurement statistics. The resulting Hamiltonian estimate is then used in the recovery experiments to compute the reconstruction error defined in the following Sec.~\ref{sec:recovery_error_and_total} in Eq.~\eqref{eq:evaluation_main}. The used black-box Hamiltonian recovery method is given in Appendix~\ref{sec:recovery}. Here we employed a black-box Hamiltonian learning method to ensure that any biases arising from a priori knowledge do not influence the observed learning rate. The code for our method is available online and open-source \cite{githubrepo}. 

We note that the details of the method are not essential and that a variety of methods based on trajectories and maximum likelihood are applicable here and can get the benefits of spread states and time-scheduling \cite{wilde2022scalably, hangleiter2024robustly,zhang2017quantum, zhao2023maximum, zhao2024learning, kang2025enhanced, wang2015hamiltonian, li2020hamiltonian, pastori2022characterization, olsacher2025hamiltonian, han2021tomography, valenti2022scalable, yu2023robust, liu2017quantum, heightman2024solving}.

We emphasize that the theoretical results derived in Sec.~\ref{sec:theory} concern the Fishe information structure of the measurement protocol and therefore do not depend on the particular estimator used. In the numerical demonstrations we first employ a Hamiltonian recovery method in order to provide a practical end-to-end illustration of how the proposed protocol can be used to infer Hamiltonian matrices from measurement data.

To validate that the observed learning behaviour arises from the intrinsic information content of the measurement statistics rather than from artefacts of the reconstruction procedure, we later complement these recovery experiments with a direct Fisher-information diagnostic (Sec.~\ref{sec:fisher_diagnostics_theory}). In this diagnostic we intercept the analysis at the level of the measurement probabilities and evaluate the Fisher information directly from their parameter dependence, before any Hamiltonian inference is performed. This allows us to verify that the observed scaling behaviour originates from the build-up of Fisher information in the measurement statistics themselves, rather than from properties of the recovery algorithm.

\subsection{Recovery Error and the Total Experiment Time}
\label{sec:recovery_error_and_total}
For each ensemble configuration, and its measurement outcomes, we recover a Hamiltonian Matrix and compute the recovery error as the mean absolute element-wise difference,
\begin{equation}
\label{eq:evaluation_main}
\varepsilon = \frac{1}{{d}^2}\sum_{i,j} \bigl| H^{\mathrm{true}}_{ij} - \hat{H}(\theta)_{ij}\bigr|,
\end{equation}
where \(H^{\mathrm{true}}\) and \(\hat H(\theta)\) denote the true and recovered Hamiltonian matrices, respectively,  and \(d = 2^n\) is the Hilbert-space dimension, so that \(d^2\) is the total number of Hamiltonian matrix elements.

The analytical connection between the experimentally observable scaling in accuracy  of the Hamiltonian recovery (given by Eq.~\ref{eq:evaluation_main}) and our theoretical description is derived as follows.
Following Proposition~\ref{prop:cumulative_scaling} and the Cramér–Rao bound, the recovery error $\varepsilon$  obeys the following behaviour (assuming Heisenberg limit with $t$: $\gamma_0 \approx 2$):
\begin{equation} \label{eq:err_1}
    \varepsilon \sim I_c(\theta)^{-1/2},
    \quad
    I_c(\theta) = \Theta\!\bigl(T_{\rm tot}^p\bigr),
    \quad
    p = \,\frac{\alpha\,\gamma_0 + 1}{\alpha + 1} + O(m_t^{-1})\quad \gamma_0 = 2,
\end{equation}
which implies
\begin{equation} \label{eq:err_2}
    \varepsilon \propto T_{\rm tot}^{-\beta},
    \qquad
    \beta(T_{\rm tot}) = \frac{p}{2},
    \qquad
    T_{\rm tot} = \sum_{k=1}^{m_t} t_k ,
\end{equation}
where $T_{\rm tot}$ is the total experiment time, given multiple time stamps. This allows us to characterise the recovery error in terms of the total experiment time. We derived $\beta$ as the experimentally accessible error-scaling exponent, and $\alpha$ is the scheduling parameter controlling the distribution of measurement times (introduced in section \ref{sec:quadratic_scheduling}). 
Together, Eqs.~(\ref{eq:err_1})–(\ref{eq:err_2}) provide the theoretical benchmark to interpret the numerical results.

\subsubsection{Illustration of Beyond-SQL Hamiltonian Learning}
\label{sec:representative_examples}

To illustrate the behaviour of the learning protocol, we first present representative examples for each of the considered Hamiltonian models in Sec.~\ref{sec:model}, in Fig.~\ref{fig:break_SQL_families}. In these examples we fix a moderately large ensemble of $R=32$ spread states, measured in $25$ random Pauli-product bases at $m_t=8$ evolution times $t_k = \Delta t\,k^\alpha$ with $\Delta t=0.01$ and $\alpha=1.0$.

For all Hamiltonian families considered in Sec.~\ref{sec:model}, the reconstruction error of the full Hamiltonian matrix decreases with the total experiment time $T_{\rm tot} = \sum_k t_k$ at a rate surpassing the standard quantum limit. The observed behaviour is consistent with the theoretically predicted scaling of the recovery error $\varepsilon \propto T_{\rm tot}^{-\beta}$ derived in Sec.~\ref{sec:recovery_error_and_total}, assuming quadratic single-parameter Fisher scaling in the short-time regime.

These examples demonstrate that ensemble-averaged randomized measurements enable beyond-SQL Hamiltonian learning across a variety of many-body Hamiltonian families.

\begin{figure}[ht!]
  \centering
  \begin{subfigure}[t]{0.45\textwidth}
    \includegraphics[width=\textwidth]{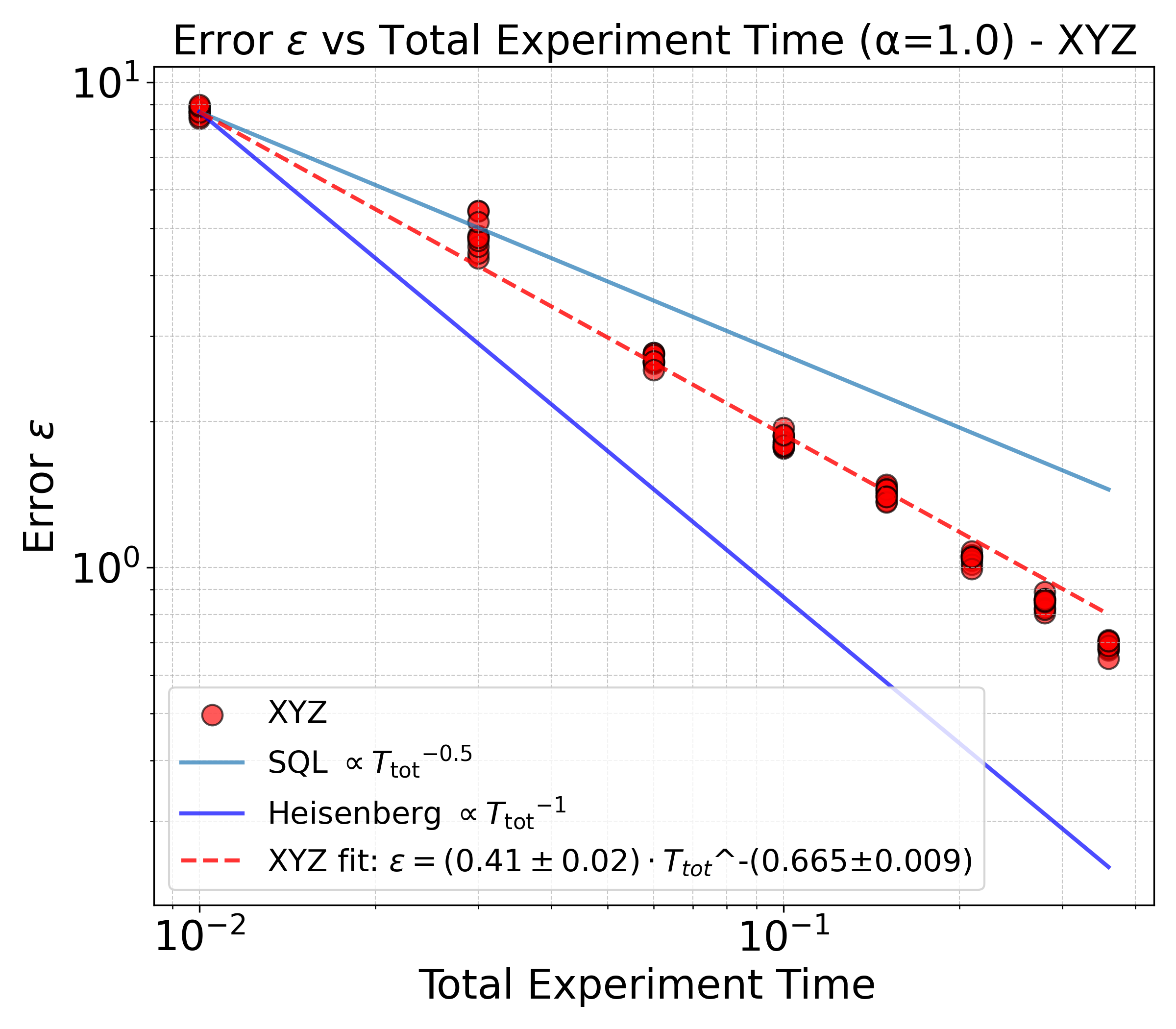}
    \caption{XYZ Hamiltonian}
    \label{fig:family1_alpha1}
  \end{subfigure}
  \hfill
  \begin{subfigure}[t]{0.45\textwidth}
    \includegraphics[width=\textwidth]{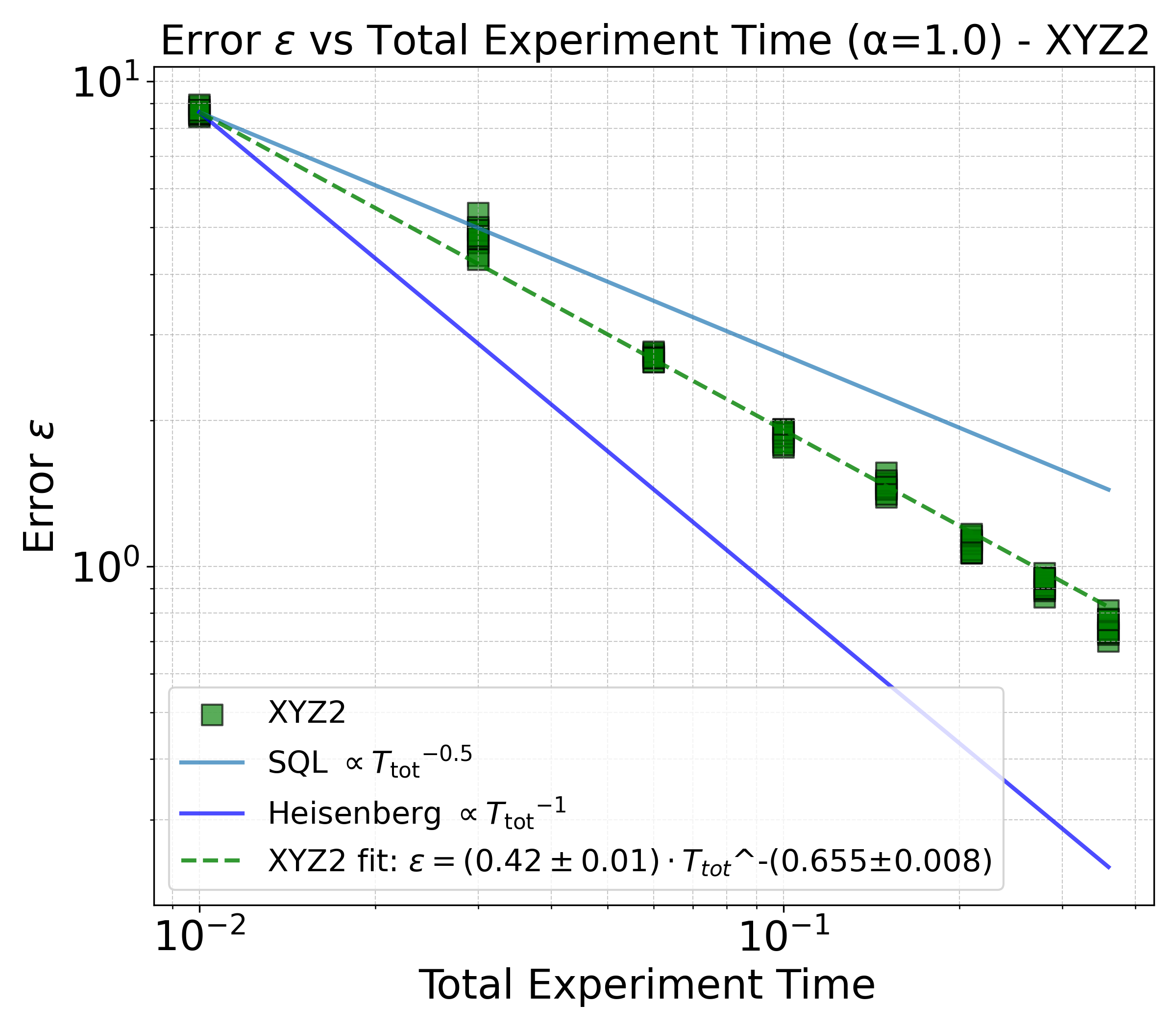}
    \caption{XYZ2 Hamiltonian}
    \label{fig:family2_alpha1}
  \end{subfigure}

  \vspace{0.5em}
  \begin{subfigure}[t]{0.45\textwidth}
    \includegraphics[width=\textwidth]{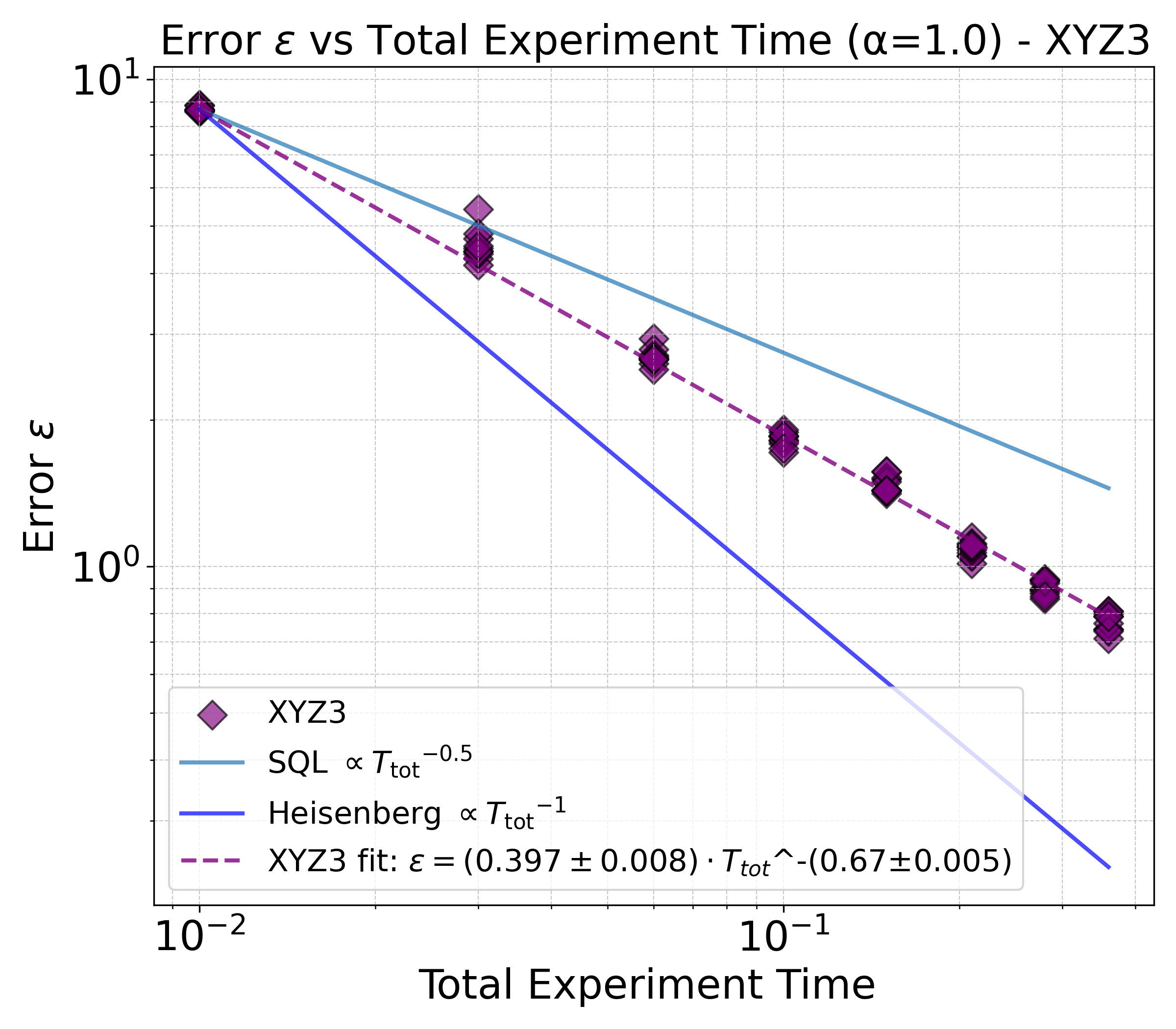}
    \caption{XYZ3 Hamiltonian}
    \label{fig:family3_alpha1}
  \end{subfigure}
  \hfill
  \begin{subfigure}[t]{0.45\textwidth}
    \includegraphics[width=\textwidth]{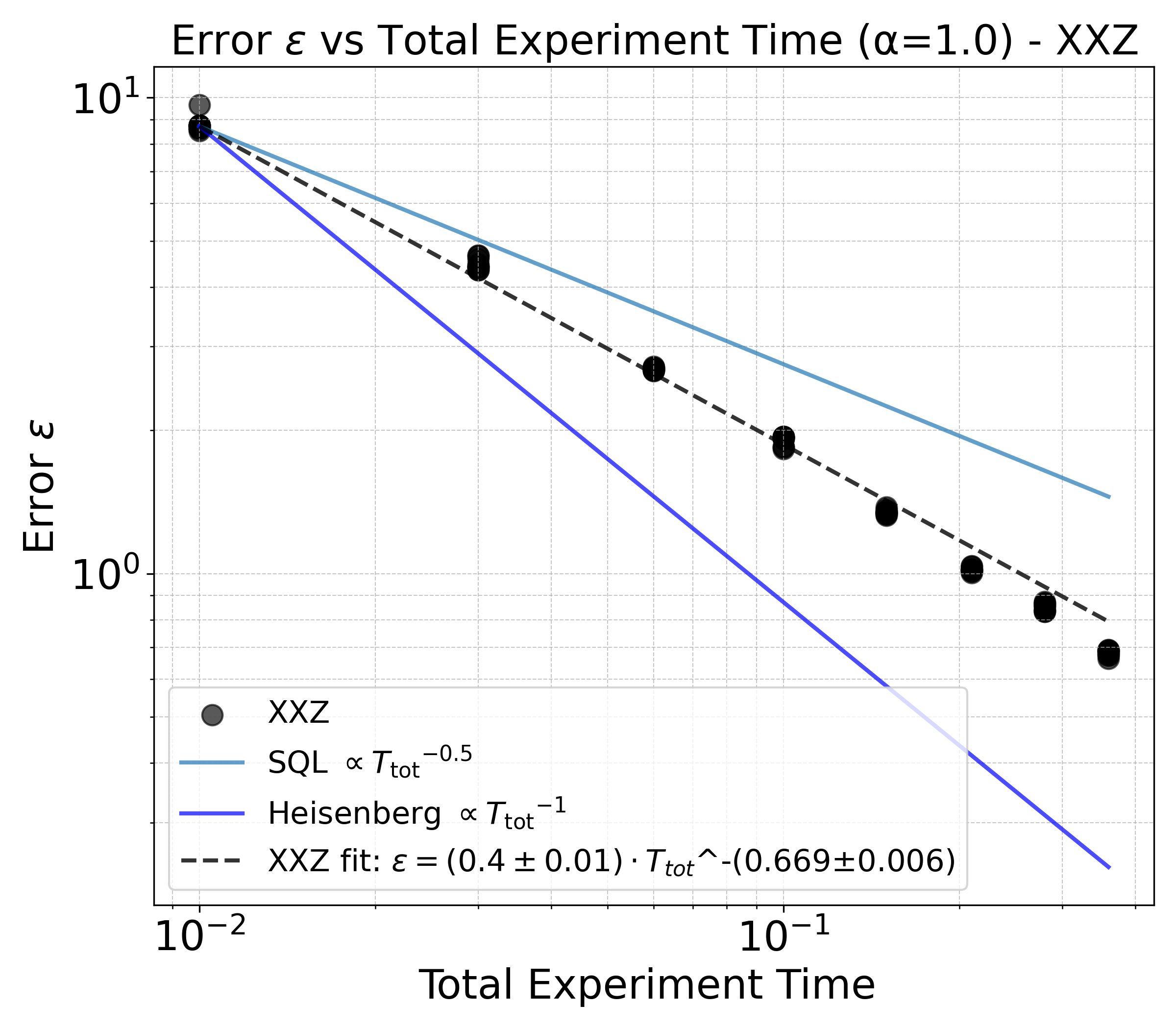}
    \caption{XXZ Hamiltonian}
    \label{fig:family4_alpha1}
  \end{subfigure}
  \caption{
Reconstruction error $\varepsilon$ versus total experiment time $T_{\rm tot}$ for four representative Hamiltonian families at $\alpha = 1.0$. 
Each data point is given with ten random Hamiltonian realizations. 
Across all Hamiltonians, the error decays as $\varepsilon \propto T_{\rm tot}^{-\beta}$ with $\beta \approx 0.66$, surpassing the standard quantum limit and consistent with Heisenberg-limited single-probe scaling.}
  \label{fig:break_SQL_families}
\end{figure}

\subsection{Hamiltonian Learning with Ensembles of Spread Initial States}
\label{res:accessing_quadratic_regime}

In this section, we demonstrate how increasing the number of spread states improves the stability of Hamiltonian learning and enables the simultaneous recovery of all Hamiltonian parameters from the same dataset, even away from the asymptotic limit. In particular, averaging over spread states leads to an effective diagonalization of the Fisher information matrix, as predicted by Theorem~\ref{thm:fisher_diagonalization} in Sec.~\ref{sec:fisher_isotropy}. This suppresses cross-correlations between parameters so that the quadratic single-parameter sensitivity regime becomes simultaneously accessible for all Hamiltonian parameters.

To examine this effect, we generated ensembles of $R \in \{2,4,8,16,32,64\}$ spread states according to Def.~\ref{dfn:state_spreading}, measured each at $m_t=15$ evolution times $t_k = \Delta t\,k^\alpha$ with $\Delta t=0.01$ and fixed $\alpha=1.0$, and performed one-shot measurements in 25 random Pauli-product bases.  The Hamiltonian parameters were recovered via maximum-likelihood estimation for the Hamiltonian Matrix (see Sec.~\ref{sec:recovery}). For each ensemble size $R$, we fitted the reconstruction error $\varepsilon$ against total experiment time $T_{\rm tot} = \sum_k t_k$ to extract the scaling exponent $\beta$, and then studied its dependence on $R$.

Figure~\ref{fig:scaling_state_spreading} shows how as the ensemble size $R$ grows, the reconstruction error, which is formulated with respect to the whole Hamiltonian matrix, decreases faster and the fitted exponent $\beta$ approaches the theoretical single-parameter limit, predicted for multi-time stamp learning in the quadratic fisher information window. This convergence indicates that oscillatory interference terms in the measurement dynamics are progressively cancelled in the ensemble average. As a result, correlations between different Hamiltonian parameters are reduced and the Fisher information matrix becomes increasingly diagonal. This enables all Hamiltonian parameters to be recovered simultaneously from the same dataset. 

\begin{figure}[ht!]
  \centering
  \begin{subfigure}[t]{0.45\textwidth}
    \includegraphics[width=\textwidth]{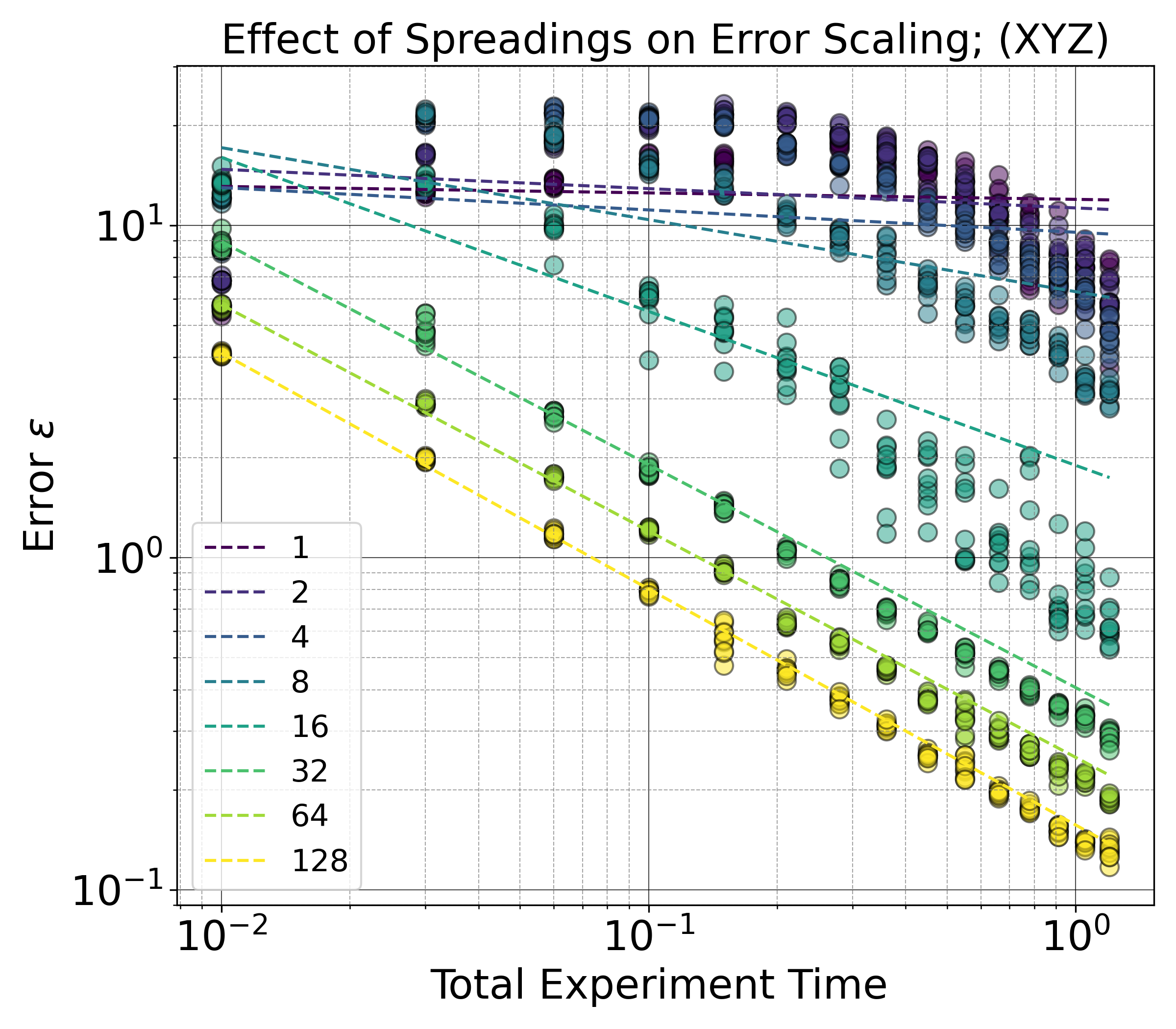}
    \caption{Reconstruction error $\varepsilon$ vs.\ $T_{\rm tot}$ for increasing Number of Spread States $R$.}
    \label{fig:perturbation_scaling}
  \end{subfigure}
  \hfill
  \begin{subfigure}[t]{0.45\textwidth}
    \includegraphics[width=\textwidth]{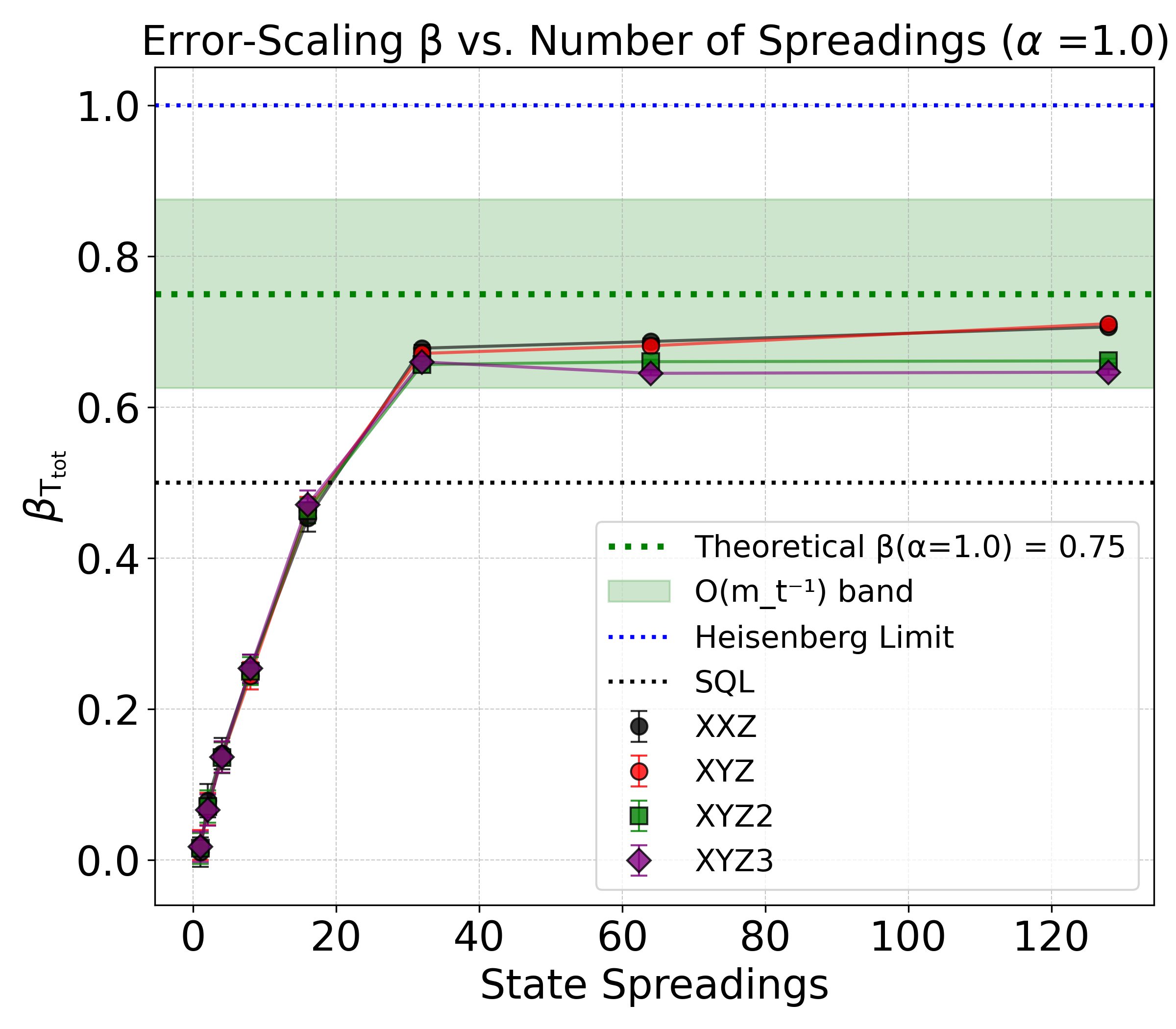}
    \caption{Extracted scaling exponent $\beta$ vs. Number of Spread States\ $R$.}
    \label{fig:state_spread_lr}
  \end{subfigure}
  \caption{
(a) Reconstruction error $\varepsilon$ versus total experiment time $T_{\rm tot}$ (per spread state) for growing number of spread states $R$ (XYZ model  Eq.~\ref{eq:first_test_ham}). 
(b) Corresponding scaling exponents $\beta$ obtained from $\varepsilon \propto T_{\rm tot}^{-\beta}$ across Hamiltonian families defined in Sec.~\ref{sec:model}. 
As $R$ increases, $\beta$ converges toward the prediction $\beta \approx 0.75$ (see Eqs.~\ref{eq:err_1} and \ref{eq:err_2} with $\alpha=1.0$, assuming Heisenberg-limited scaling with single evolution time $t$, $\gamma_0=2$). This behaviour indicates that averaging over spread states progressively suppresses parameter cross-correlations so that the Fisher information matrix becomes effectively diagonal (Theorem~\ref{thm:fisher_diagonalization}). As a result, the quadratic single-parameter sensitivity regime becomes simultaneously accessible for all Hamiltonian parameters, allowing the entire Hamiltonian to be recovered from the same dataset. See Appendix~\ref{app:spread_scaling_data} for the visualized data.}
  \label{fig:scaling_state_spreading}
\end{figure}

\subsection{Managing the Super-Linear learning rate Degradation in Multi-Timestamp Scenarios}
\label{sec:numerical_emergent_coherence}

Given that the Fisher information scales super-linearly with evolution time $t$, then it not necessarily, with total experiment time $T_{\rm tot} = \sum_k t_k$ at the same scaling rate ( since $(\sum_k t_k)^2 \neq \sum_k t_k^2$ ). As discussed in Sec.~\ref{sec:quadratic_scheduling}, this issue can be managed by non-uniform measurement scheduling, which we demonstrate here explicitly and illustrate with Figure.~\ref{fig:perturbation_scaling}.

In Figure~\ref{fig:scaling_alphas}, we see that increasing $\alpha$  increases the error scaling $\beta$ with total experiment time $T_{\rm tot} = \sum_k t_k$ as predicted by the theoretical curve, given in Eq.~\ref{eq:err_1}
derived for given Heisenberg-limited scaling with evolution time $t$.
\begin{figure}[h!]
  \centering
  \includegraphics[width=0.45\textwidth]{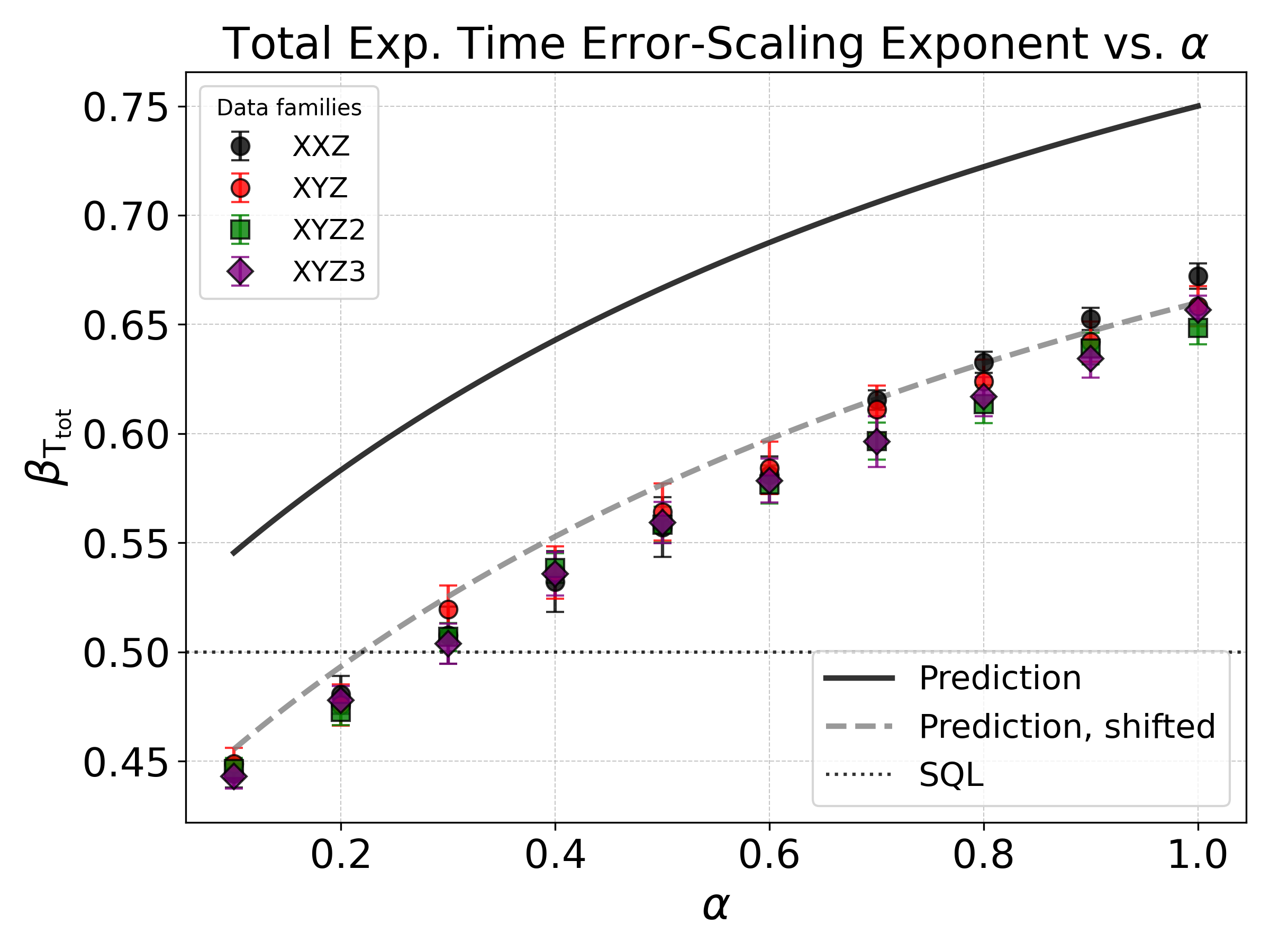}
  \caption{
Empirical scaling exponent $\beta_{T_{\rm tot}}$ as a function of the scheduling parameter $\alpha$.
The solid curve shows the theoretical prediction 
$\beta_{T_{\rm tot}}(\alpha) = \frac{1}{2}\,\frac{\alpha \gamma_0 + 1}{\alpha + 1}$ given based on assuming Heisenberg limited scaling:$\gamma_0 = 2$, and the dashed curve includes a small vertical offset accounting for finite ensemble and time stamp sampling effects on the diagonalization of the Fisher Matrix, affecting the recovery performance, that vanish asymptotically (see Props.~\ref{prop:cumulative_scaling} and Thm.~\ref{thm:fisher_diagonalization}). See Appendix~\ref{app:alpha_scaling_data} for visualized data.
}
  \label{fig:scaling_alphas}
\end{figure}
To examine this effect, we learn Hamiltonians from each Hamiltonian Model (given in Sec.~\ref{sec:model}), with $n=$5 qubits, for $\alpha \in \{0.1,0.2,\ldots,1.0\}$, each with an ensemble of a fixed number of $R=$ 32 spread states, with one-shot measurements in 25 random Pauli-product bases at $m_t = 8$ evolution times $t_k = \Delta t\,k^\alpha$ with $\Delta t = 0.01$.
For each scheduling parametrized by $\alpha$, we extracted the empirical error scaling $\varepsilon \propto T_{\rm tot}^{-\beta(\alpha)}$ as explained in Sec.~\ref{sec:recovery_error_and_total}.
The theoretically predicted learning rate behaviour with $\alpha$ is given at the asymptotic limit, meaning, assuming Heisenberg limited scaling with single evolution time $t$: $\gamma_0=2$ in Eqs.~ \ref{eq:err_1} and \ref{eq:err_2}.
The vertical offsets to the theoretically predicted behaviour in Figure~\ref{fig:scaling_alphas} remain constant across the range of $\alpha$. The existence of an offset agrees with the predicted asymptotic approach to full Fisher matrix diagonalization with a finite number of spread states (see Thm.~\ref{thm:fisher_diagonalization}) affecting the performance of the Hamiltonian matrix recovery convergence, and with the finite time stamp sampling effect (see Props.~\ref{prop:cumulative_scaling}).

\subsection{Fisher Information Diagnostic}
\label{sec:fisher_diagnostics_theory}

The recovery-error experiments above show that increasing the number of spread states enables the simultaneous learning of all Hamiltonian parameters at the theoretically predicted rate. This behaviour is consistent with the diagonalization of the Fisher information matrix, which removes parameter correlations and allows the quadratic single-parameter sensitivity regime to become accessible for all parameters simultaneously. This raises a more refined question related to the discussion at the end of Sec.~\ref{sec:random_fisher_scaling}: does the quadratic short-time sensitivity regime already practically exist when using only a single spread state together with sufficiently many measurement bases, or does it only emerge once a sufficient number of spread states is used, similar to the requirements for Fisher diagonalization?

To address this question we analyse the Fisher information of the measurement
statistics directly rather than relying on the recovery error.

Specifically, we consider the classical multi-parameter Fisher information
matrix \cite{liu2020quantum}. For a fixed measurement configuration defined by
a prepared spread state, measurement basis, and evolution time
$(r,j,k)$, the Fisher information matrix takes the form
\begin{equation}
\mathcal I^{(rjk)}_{ij}(\theta)
=
\sum_x
\frac{1}{p_x^{(rjk)}(\theta)}
\frac{\partial p_x^{(rjk)}(\theta)}{\partial \theta_i}
\frac{\partial p_x^{(rjk)}(\theta)}{\partial \theta_j},
\end{equation}
where $x$ denotes a measurement outcome. Because measurements performed with different configurations are statistically independent, the total Fisher information accumulated by the protocol is obtained by summing the contributions of all configurations,
\begin{equation}
\mathcal I_{ij}(\theta)
=
\sum_{r,j,k}
\sum_x
\frac{1}{p_x^{(rjk)}(\theta)}
\frac{\partial p_x^{(rjk)}(\theta)}{\partial \theta_i}
\frac{\partial p_x^{(rjk)}(\theta)}{\partial \theta_j}.
\end{equation}

The Fisher information quantifies the sensitivity of the measurement probabilities to variations of the Hamiltonian parameters. Each measurement configuration (defined by the prepared spread state, measurement basis, and evolution time) contributes an additive term to the total Fisher information.
Consequently, as additional measurement configurations are included in the dataset, the total Fisher information accumulates accordingly. This diagnostic therefore evaluates the information growth in the measurement statistics before any Hamiltonian reconstruction is performed.

We evaluate the trace of the Fisher matrix,
\begin{equation}
\mathrm{Tr}\,\mathcal I
=
\sum_i \mathcal I_{ii}
=
\sum_{r,j,k}\sum_x
\frac{\sum_i
\left(\partial_{\theta_i} p_x^{(rjk)}\right)^2}
{p_x^{(rjk)}}
=
\sum_{r,j,k}\sum_x
\frac{\|\nabla_\theta p_x^{(rjk)}\|^2}
{p_x^{(rjk)}}.
\end{equation}

The trace equals the sum of the eigenvalues of the Fisher matrix and therefore quantifies the total information contained in the measurement statistics about all Hamiltonian parameters, in a multi-parameter scenario. Because it can be evaluated without diagonalizing the Fisher matrix, it provides a direct probe of how the total Fisher information grows with evolution time in the case of multi-parameter Hamiltonians, even when parameter correlations obscure the behaviour of individual matrix elements (see Sec.~\ref{sec:fisher_isotropy}).

Here $p_x^{(rjk)}(\theta)$ denotes the probability of measurement outcome $x$ for Hamiltonian parameters $\theta$ in configuration $(r,j,k)$. In the experimental protocol considered in this work, these probabilities arise from preparing a spread state $\ket{\psi_0^{(r)}}$, evolving it under the true Hamiltonian $H(\theta)$ for time $t_k$, and measuring in a Pauli-product basis $p_j$, so that
\begin{equation}
p_x^{(rjk)}(\theta)
=
\left|
\langle x \mid p_j\,
e^{-i H(\theta) t_k}
\mid \psi_0^{(r)}
\rangle
\right|^2 .
\end{equation}

The parameter derivatives entering the Fisher information quantify the sensitivity of these probabilities to infinitesimal perturbations of the Hamiltonian parameters,
\begin{equation}
\partial_{\theta_i} p_x^{(rjk)}(\theta)
=
\lim_{\epsilon \to 0}
\frac{p_x^{(rjk)}(\theta+\epsilon e_i)-p_x^{(rjk)}(\theta)}{\epsilon}.
\end{equation}
where $e_i$ denotes the unit vector in parameter space. In the numerical experiments the probabilities are obtained from the simulated measurement model underlying the Hamiltonian estimator introduced in Sec.~\ref{sec:estimator}, and the gradients $\nabla_\theta p_x^{(rjk)}(\theta)$ are evaluated using automatic differentiation, which computes the derivatives up to machine precision.

In the multi-time measurement protocol used throughout this work, measurements are performed at evolution times \(t_k = \Delta t\,k^\alpha\). The cumulative Fisher information accumulated over the full trajectory therefore follows the scaling predicted in Proposition~\eqref{prop:cumulative_scaling},
\begin{equation}
I_{\rm tot}=\sum_k \mathcal I(t_k)
\propto T_{\rm tot}^{\,p},
\qquad
p=\frac{\alpha\gamma_0+1}{\alpha+1},
\label{eq:cumulative_fisher_scaling}
\end{equation}
where \(T_{\rm tot}=\sum_k t_k\) denotes the total experiment time. For the schedule used in the experiment (\(\alpha=1\) and \(\gamma_0=2\)), this yields the prediction \(p=\frac{3}{2}\),
\begin{equation}
I_{\rm tot}\propto T_{\rm tot}^{3/2}.
\end{equation}
To test this prediction we evaluate the Fisher trace as a function of the total experiment time and fit the resulting trajectories to a power law,
\begin{equation}
\mathrm{Tr}\,\mathcal I(T_{\rm tot}) = A\,T_{\rm tot}^{p}.
\end{equation}

\subsubsection{Quadratic Fisher-Scaling Diagnostics}
\label{sec:fisher_diagnostics_experiments}

In the diagnostic experiment we generate ensembles of spread states with sizes \(R \in \{1,2,4,8,16\}\) according to Def.~\ref{dfn:state_spreading}. Each spread state is measured in \(25\) random Pauli-product bases and evolved under the Hamiltonian at multiple evolution times following the schedule \(t_k = \Delta t\,k^\alpha\), with \(\Delta t=0.01\), fixed \(\alpha=1.0\), and \(k \in \{2,4,6,8,10\}\). For each measurement configuration we simulate the time evolution and compute the corresponding measurement probabilities. The Fisher information is then evaluated numerically using automatic differentiation of the probabilities with respect to the Hamiltonian parameters.

Figure~\ref{fig:fisher_diagnostic}(a) shows the scaling of the Fisher trace as a function of the total experiment time \(T_{\rm tot}\) for different spread-state ensemble sizes \(R\) for a representative Hamiltonian family (here the XYZ model). The diagnostic experiment uses a system size of \(n=5\) qubits and evaluates time stamps \(t_k=\Delta t\,k^\alpha\) with \(\Delta t=0.01\), \(\alpha=1.0\), and \(k\in\{2,4,6,8,10\}\). For each configuration we generate spread-state ensembles of size \(R\in\{1,2,4,8,16\}\) and measure each state in \(25\) one-shot random Pauli-product bases. The trajectories follow a clear power-law behaviour consistent with \(\mathrm{Tr}\,\mathcal I(T_{\rm tot}) \propto T_{\rm tot}^{3/2}\), confirming the predicted cumulative Fisher scaling for the multi-time stamp protocol scaling \(p=\frac3 2\) for time stamp scheduling with \(\alpha=1.0\), visible in Eq.~\eqref{eq:cumulative_fisher_scaling}.

Figure~\ref{fig:fisher_diagnostic}(b) shows the corresponding fitted scaling exponent \(p\) as a function of \(R\), aggregated across the Hamiltonian families defined in Sec.~\ref{sec:model}. For this experiment we keep the system size fixed at \(n=5\) qubits and use the same measurement configuration and time schedule as in panel (a). Across all tested ensemble sizes \(R\), the numerical experiment yields exponents \(p\approx1.49\), which is in agreement with the theoretical prediction \(p=3/2\). Notably, the predicted scaling already appears for the case \(R=1\), demonstrating that the quadratic short-time sensitivity regime established in Sec.~\ref{sec:random_fisher_scaling} is present even with a single spread state when averaging over sufficiently many measurement bases. Increasing the number of spread states therefore primarily serves to diagonalize the Fisher information matrix and improve the conditioning of the multiparameter estimation problem, rather than creating the quadratic regime itself.

\begin{figure}[ht!]
  \centering
  \begin{subfigure}[t]{0.45\textwidth}
    \includegraphics[width=\textwidth]{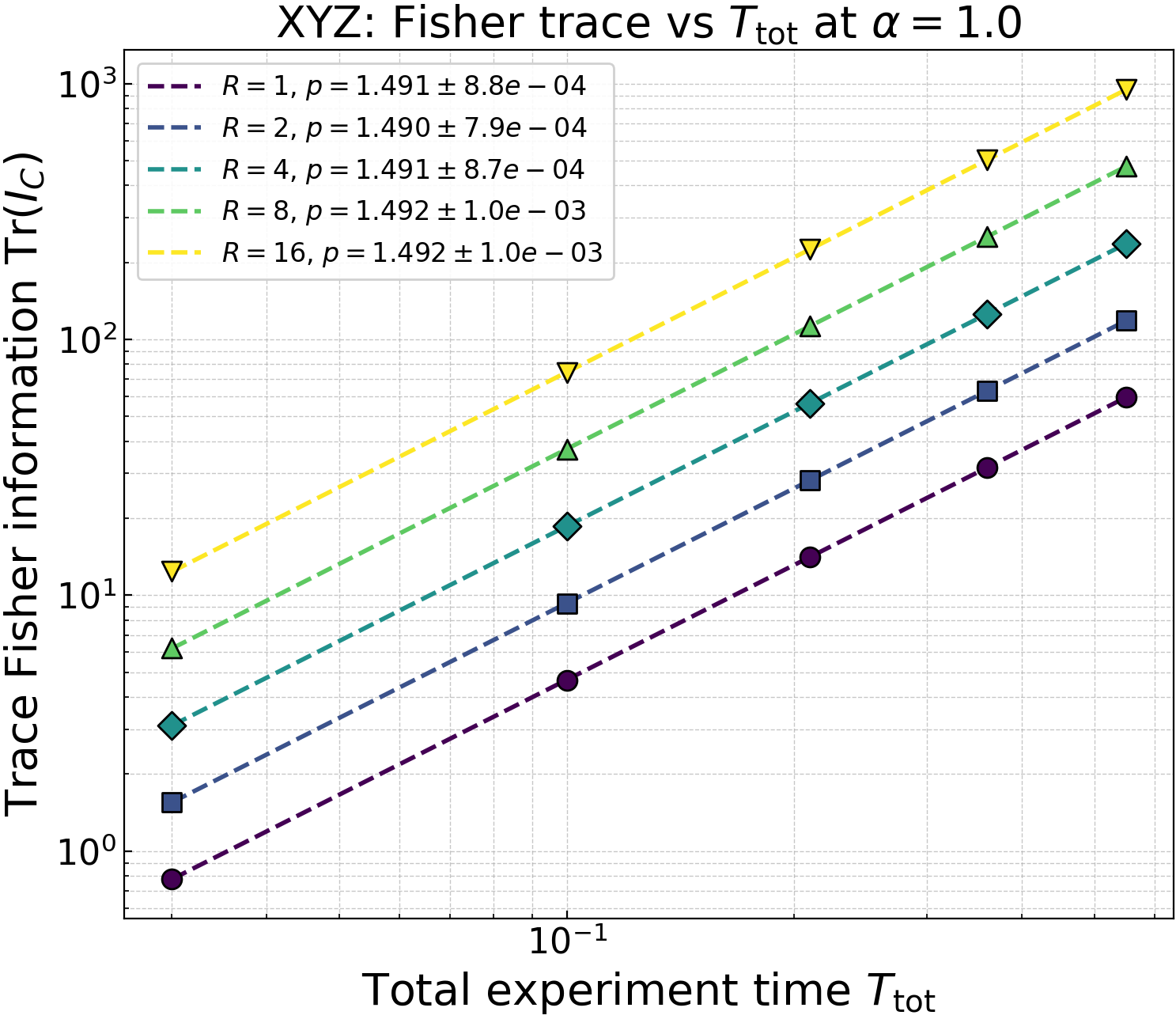}
    \caption{Trace of the Fisher information $\mathrm{Tr}\,\mathcal I$ versus total experiment time $T_{\rm tot}$ for different spread-state ensemble sizes $R$ (XYZ model).}
    \label{fig:fisher_trace_scaling}
  \end{subfigure}
  \hfill
  \begin{subfigure}[t]{0.45\textwidth}
    \includegraphics[width=\textwidth]{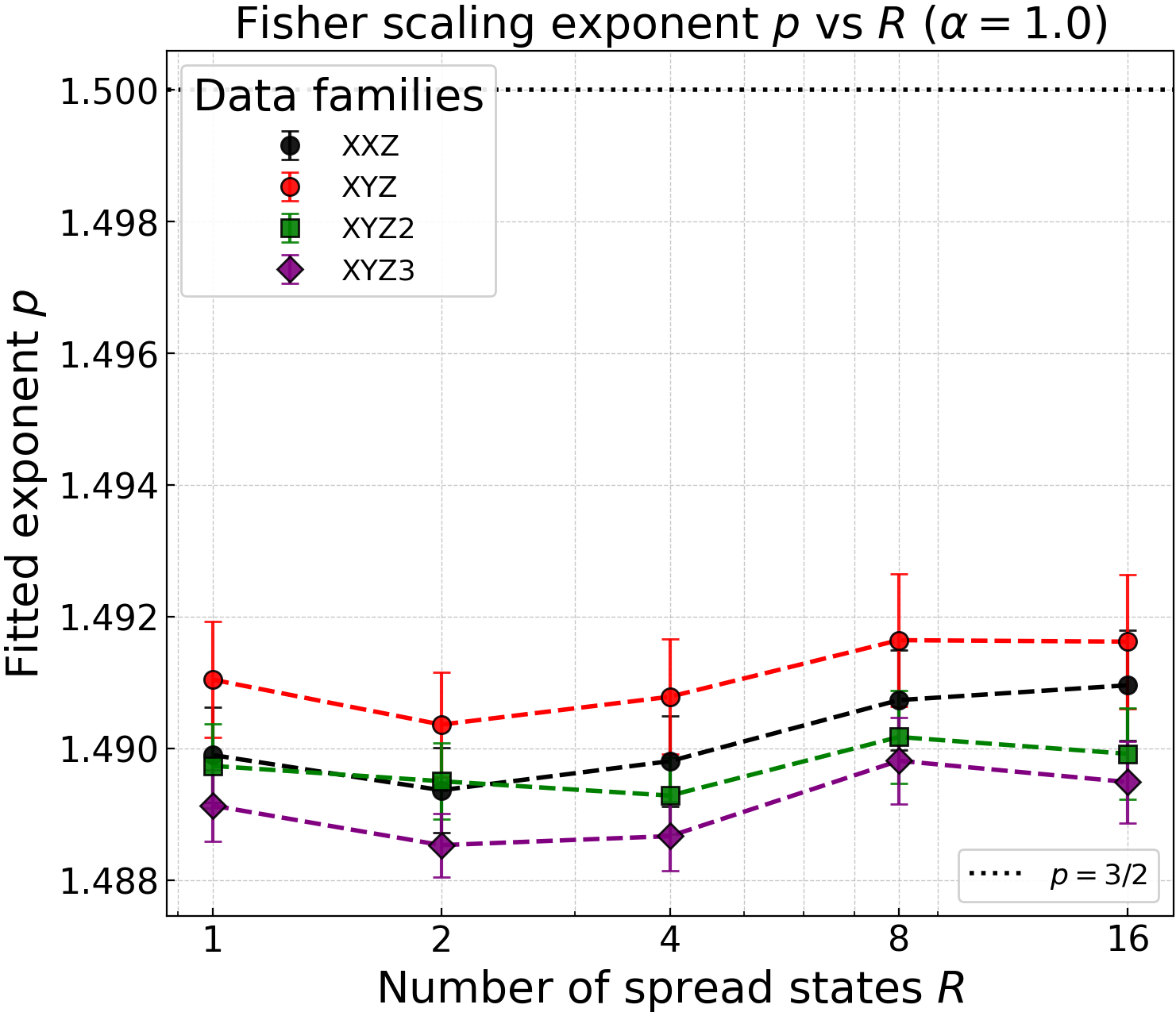}
    \caption{Fitted cumulative Fisher-scaling exponent $p$ as a function of the number of spread states $R$.}
    \label{fig:fisher_exponent_vs_R}
  \end{subfigure}
\caption{
Fisher-information diagnostic for varying numbers of spread states.
(a) Growth of the Fisher trace $\mathrm{Tr}\,\mathcal I$ with total experiment time $T_{\rm tot}$ for different ensemble sizes $R$. The trajectories follow a power-law behaviour consistent with $\mathrm{Tr}\,\mathcal I \propto T_{\rm tot}^{3/2}$ for the time-stamp schedule used in the experiment.
(b) Corresponding fitted scaling exponent $p$ as a function of $R$, aggregated across the Hamiltonian families defined in Sec.~\ref{sec:model}. The fitted exponents remain close to the theoretical prediction $p=3/2$ even for $R=1$. This demonstrates that the quadratic short-time sensitivity regime is already present when using a single spread state together with sufficiently many measurement bases. Increasing the number of spread states therefore primarily improves the conditioning of the multiparameter estimation problem by diagonalizing the Fisher matrix rather than creating the quadratic regime itself.
See Appendix~\ref{app:fisher_trace_data} and Appendix~\ref{app:fisher_exponent_spreadings} for the visualized data.
}
\label{fig:fisher_diagnostic}
\end{figure}

Finally, as Eq.~\eqref{eq:cumulative_fisher_scaling} predicts, the cumulative Fisher exponent \(p\) should vary with the scheduling parameter \(\alpha\). To validate this prediction directly on the Fisher information, we analyse the case of a single spread state (\(R=1\)) while varying the scheduling parameter \(\alpha\). The experiment is repeated across the Hamiltonian families defined in Sec.~\ref{sec:model}. For each value of \(\alpha\) we evaluate the Fisher trace as a function of the total experiment time and extract the corresponding scaling exponent \(p\). For this experiment we again consider systems with \(n=5\) qubits and evaluate time stamps \(t_k=\Delta t\,k^\alpha\) with \(\Delta t=0.01\) and \(k\in\{2,4,6,8,10\}\), while varying the scheduling exponent \(\alpha\in\{0.2,0.4,0.6,0.8,1.0\}\) and keeping the spread-state ensemble fixed at \(R=1\).

The dependence \(p(\alpha)\) is validated in Fig.~\ref{fig:fisher_alpha_diagnostic}. The extracted exponents closely follow the theoretical prediction \(p=(\alpha\gamma_0+1)/(\alpha+1)\), demonstrating that the cumulative Fisher scaling is governed by the measurement-time scheduling and confirming that the previously observed scaling \(p\approx3/2\) for the schedule \(\alpha=1\) arises from the general relation between the single-experiment Fisher scaling and the scheduling protocol described by Eq.~\eqref{eq:cumulative_fisher_scaling}, and was not a mere coincidence.

\begin{figure}[ht!]
  \centering
  \includegraphics[width=0.4\textwidth]{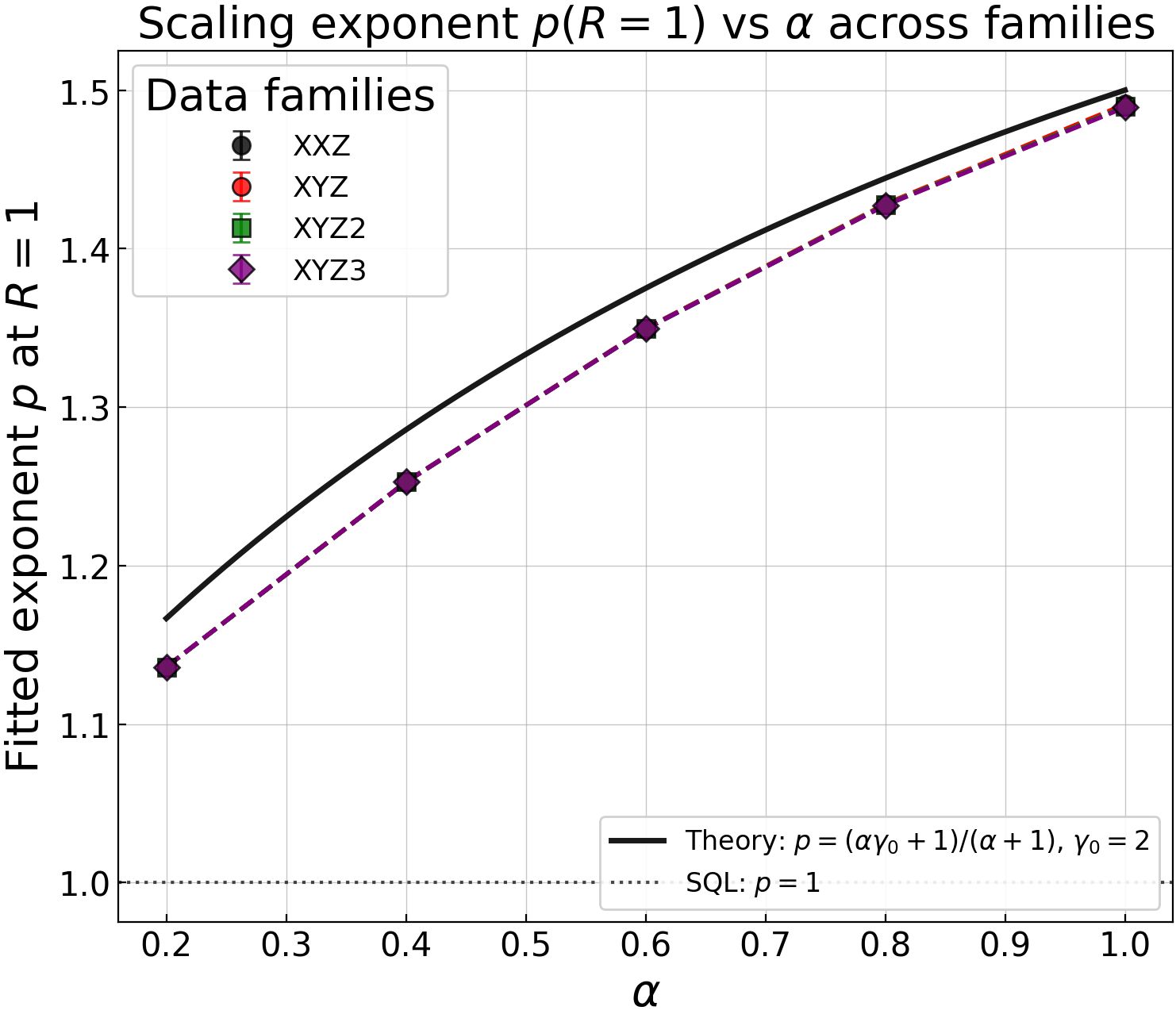}
\caption{Fitted cumulative Fisher-information scaling exponent \(p\) as a function of the time stamp scheduling parameter \(\alpha\) for a single spread state (\(R=1\)). Data points correspond to numerical fits of $\mathrm{Tr}\,\mathcal I(T_{\rm tot}) \propto T_{\rm tot}^{p}$, while the solid line shows the theoretical prediction $p=(\alpha\gamma_0+1)/(\alpha+1)$ with $\gamma_0=2$. This demonstrates that the cumulative Fisher scaling is indeed governed by the measurement-time scheduling. See Appendix~\ref{app:fisher_exponent_alpha} for the visualized data.}
  \label{fig:fisher_alpha_diagnostic}
\end{figure}

Importantly, we also verify that the spectral-activation mechanism described in Sec.~\ref{sec:random_fisher_scaling} remains effective as the system size increases. Because the spread-state construction applies independent Haar-random rotations to each qubit, the resulting randomization acts locally and therefore does not deteriorate with increasing system size. To demonstrate this behaviour numerically, we repeat the Fisher diagnostic for systems ranging from $n=2$ to $n=6$ qubits while keeping $R=1$ and $\alpha=1$. The experiment uses the same measurement configuration as in the previous Fisher diagnostics, with $5$ one-shot random Pauli-product measurement bases and time stamps $t_k=\Delta t\,k^\alpha$ with $\Delta t=0.01$ and $k\in\{2,4,6,8,10\}$.
\begin{figure}[t]
\begin{minipage}{0.48\linewidth}
\centering
\includegraphics[width=\linewidth]{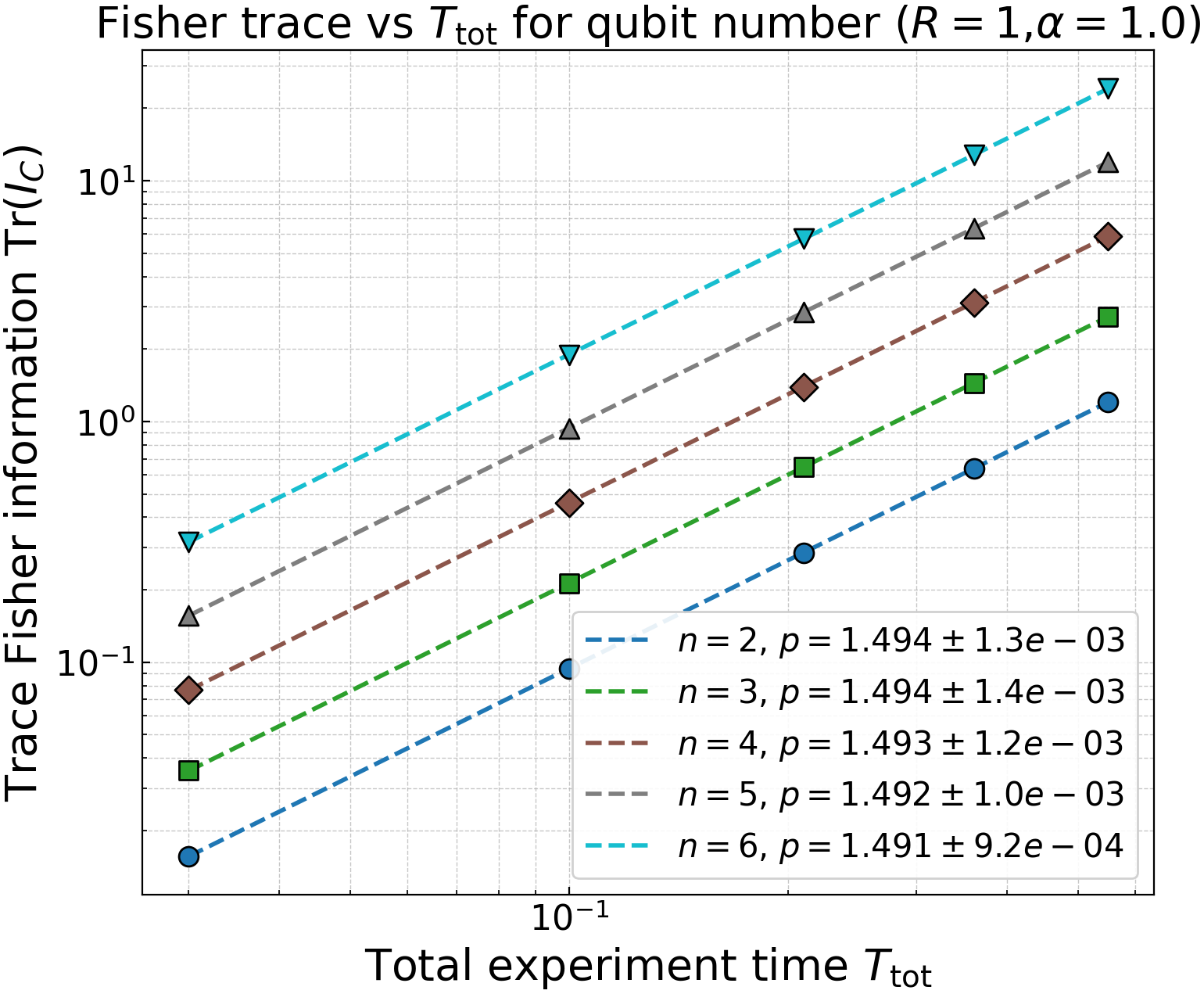}
\caption{
Scaling of the trace of the classical Fisher information $\mathrm{Tr}\,\mathcal I$ as a function of the total experiment time $T_{\rm tot}$ for different system sizes. 
Each curve corresponds to a different number of qubits $n$ with fixed spread-state ensemble size $R=1$ and scheduling exponent $\alpha=1$. 
Dashed lines show power-law fits $\mathrm{Tr}\,\mathcal I \propto T_{\rm tot}^{p}$. 
Across the range of qubit numbers $n=2$ to $n=6$, the fitted exponents remain close to the theoretical prediction $p=\tfrac{3}{2}$ for multi time stamp scenarios (as given in Sec.~\ref{sec:quadratic_scheduling}), indicating that the underlying quadratic short-time Fisher scaling for single evolution times persists, independent of the system size.  The employed Hamiltonian is from the XYZ family (see Sec.~\ref{sec:model}). See Appendix~\ref{app:fishertrace_qubits} for the visualized data.
}
\label{fig:fisher_qubit_scaling}
\end{minipage}
\hfill
\begin{minipage}{0.48\linewidth}
\centering
\includegraphics[width=\linewidth]{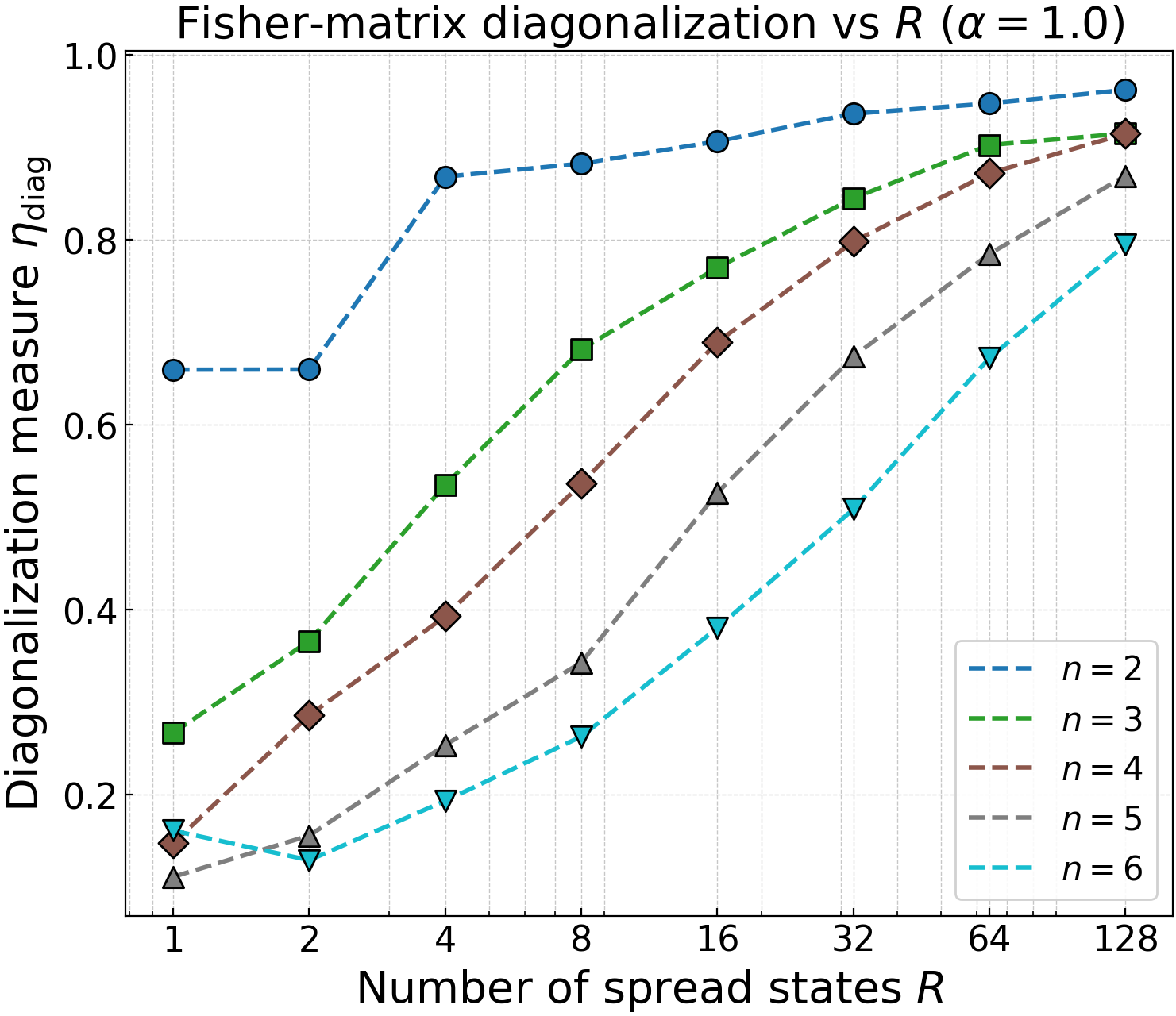}
\caption{
Diagonalization of the Fisher information matrix as a function of the number of spread states \(R\). The diagonalization measure \(\eta_{\mathrm{diag}}\) quantifies the relative weight of the diagonal entries of the Fisher matrix compared to its full Frobenius norm.  Values close to \(1\) indicate that the Fisher information is largely concentrated on the diagonal, corresponding to weak correlations between Hamiltonian parameters. Across the range of qubit numbers $n=2$ to $n=6$, the numerical results show that increasing the spread-state ensemble progressively suppresses off-diagonal Fisher couplings.  While larger systems exhibit stronger initial parameter correlations, the same diagonalization trend with increasing \(R\) persists across system sizes.  The employed Hamiltonian is from the XYZ family (see Sec.~\ref{sec:model}). See Appendix~\ref{app:fisher_diag_R} for the visualized data.
}
\label{fig:fisher_diag_eta}
\end{minipage}
\end{figure}
As shown in Fig.~\ref{fig:fisher_qubit_scaling}, the fitted Fisher-scaling exponent remains close to the theoretical value $p=\tfrac{3}{2}$ across all tested system sizes, indicating that the quadratic short-time regime remains practically accessible as the system grows.

These diagnostics therefore clarify the origin of the scaling behaviour observed in the recovery experiments. The quadratic short-time Fisher-information regime predicted in Sec.~\ref{sec:random_fisher_scaling} is already present when using a single spread state together with sufficiently many measurement bases. Increasing the number of spread states does not create this regime but instead improves the conditioning of the multiparameter estimation problem by suppressing parameter correlations and making the Fisher information matrix progressively closer to diagonal. The Fisher diagnostics thus confirm that the enhanced learning performance observed in the recovery experiments originates from the intrinsic information growth of the measurement statistics rather than from artefacts of the reconstruction procedure.

\subsubsection{Fisher-Matrix Diagonalization Diagnostics}
\label{sec:fisher_diagnostics_diagon}
While the Fisher-scaling diagnostics establish that the measurement protocol generates the predicted quadratic short-time information growth, they do not yet validate that increasing the number of spread states causes the stability of the Hamiltonian reconstruction observed in the recovery experiments. In particular, the recovery task is a multiparameter estimation problem in which correlations between different Hamiltonian parameters can degrade the conditioning of the inference problem, making isolation of each parameter necessary for its inference (see Sec.\ref{sec:fisher_isotropy}), even when the total Fisher information grows favourably. We have observed in the numerical Hamiltonian recovery experiment in Sec.~\ref{sec:numerical_emergent_coherence}, that increasing the spread state ensemble size improves the recoverability of all Hamiltonian parameters at once from the same dataset. To validate that this effect directly is inherited from the resulting Fisher information matrix structure (as predicted in theory Sec.\ref{sec:fisher_isotropy}), we therefore analyse the structure of the classical Fisher information matrix itself rather than only its trace.

The spread-state construction suggests a natural mechanism for suppressing such parameter correlations. Because each spread state applies independent Haar-random single-qubit rotations before measurement, different Hamiltonian terms are probed under locally randomized orientations. As shown analytically in Theorem~\ref{thm:fisher_diagonalization}, averaging over multiple spread states therefore progressively decorrelates the sensitivities of the measurement probabilities with respect to different Hamiltonian parameters. As a result, the Fisher information matrix is expected to become increasingly diagonal as the number of spread states \(R\) grows, improving the conditioning of the multiparameter estimation problem without changing the intrinsic Fisher scaling established in Thm.~\ref{thm:fisher_window}.

To quantify this diagonalization effect, we evaluate the relative weight of the diagonal part of the Fisher information matrix compared to its full matrix norm. Specifically, given the Fisher information matrix accumulated over all measurement configurations $(r,j,k)$,
\begin{equation}
\mathcal I_{ij}(\theta)
=
\sum_{r,j,k}
\sum_x
\frac{1}{p_x^{(rjk)}(\theta)}
\frac{\partial p_x^{(rjk)}(\theta)}{\partial \theta_i}
\frac{\partial p_x^{(rjk)}(\theta)}{\partial \theta_j},
\end{equation}
we define the diagonalization measure
\begin{equation}
\eta_{\mathrm{diag}}
=
\frac{\sum_i \mathcal I_{ii}^2}{\sum_{i,j} \mathcal I_{ij}^2}
=
\frac{\|\mathrm{diag}(\mathcal I)\|_F^2}{\|\mathcal I\|_F^2}.
\end{equation}
Here $\|\cdot\|_F$ denotes the Frobenius norm and $\mathrm{diag}(\mathcal I)$ denotes the matrix containing only the diagonal elements $\mathcal I_{ii}$. Values $\eta_{\mathrm{diag}}\approx1$ indicate that most of the Fisher information is concentrated in the diagonal entries, corresponding to weak parameter correlations, whereas smaller values signal stronger off-diagonal couplings between parameters. By analysing $\eta_{\mathrm{diag}}$ as a function of the spread-state ensemble size $R$, we directly quantify how increasing the number of spread states suppresses parameter correlations and improves the conditioning of the multi-parameter estimation problem.

We perform a Fisher diagnostic experiment in which we evaluate the Fisher information matrix for increasing spread-state ensemble sizes \(R \in \{1,2,4,8,16,32,64,128\}\). The experiment is repeated for systems ranging from \(n=2\) to \(n=6\) qubits while keeping the scheduling parameter fixed at \(\alpha=1\),  with $1$ one-shot random Pauli-product measurement bases. To focus on the instantaneous structure of the Fisher matrix, the diagnostic is evaluated at the early time stamps corresponding to \(k\in\{1\}\) in the time schedule \(t_k=\Delta t\,k^\alpha\) with \(\Delta t=0.01\).

Figure~\ref{fig:fisher_diag_eta} shows the resulting diagonalization measure \(\eta_{\mathrm{diag}}\) as a function of \(R\) for different system sizes. As predicted by the analytical result of Theorem~\ref{thm:fisher_diagonalization}, the Fisher matrix becomes progressively more diagonal as the spread-state ensemble increases. While larger systems exhibit a lower initial diagonalization level, reflecting stronger parameter correlations in the larger Hamiltonian parameter space, the same qualitative diagonalization trend with increasing \(R\) is observed across all tested system sizes. This indicates that the decorrelation mechanism induced by locally randomized spread states remains effective beyond small Hilbert-space dimensions, although achieving comparable levels of Fisher-matrix diagonalization may require moderately larger spread-state ensembles as the number of Hamiltonian parameters increases.

These diagnostics therefore shows why increasing the number of spread states improves the robustness of the Hamiltonian reconstruction observed in the recovery experiments. While the intrinsic Fisher scaling determines how rapidly information accumulates with total experiment time, the diagonalization of the Fisher matrix governs the conditioning of the multi-parameter estimation problem. Increasing the spread-state ensemble therefore enables simultaneous learning of multiple Hamiltonian parameters by suppressing parameter correlations while preserving the favourable Fisher scaling established in the previous section.

\section{Discussion and Outlook}
\label{sec:summary_discussion}

 We introduced a Hamiltonian learning strategy that makes the naturally occurring quadratic short-time Fisher-information regime practically usable for Hamiltonian learning. Rather than creating a new scaling law, the protocol enables the generic quadratic Fisher scaling to be exploited over a sufficiently long temporal window to become operationally useful. The key mechanism is the use of locally random spread states together with random Pauli-product measurements. Spread states activate the Hamiltonian eigenbasis and prevent state-induced suppression of spectral gaps, while randomized measurements reveal these spectral components in the measurement statistics. As a result, the quadratic Fisher-information regime can be accessed in practice without requiring entanglement, coherent joint measurements, or dynamical control, relying only on local state preparation, local measurements, and classical data aggregation.

Secondarily, we showed that ensemble averaging over spread states enables multi-parameter Hamiltonian learning without requiring structural isolation of parameters. As the ensemble size increases, statistical correlations between parameters are progressively suppressed and the Fisher information matrix becomes effectively diagonal. This Fisher diagonalization removes cross-parameter correlations and allows all Hamiltonian parameters to be estimated simultaneously from a single data set, without relying on structural priors such as sparsity, or commutativity.

To demonstrate the practical implications of these mechanisms, we performed numerical Hamiltonian recovery experiments across several disordered many-body Hamiltonians. These simulations show consistent beyond-SQL error scaling and parameter-independent learning rates using only local product states and local measurements, illustrating that the protocol enables efficient Hamiltonian learning in realistic settings.

To verify that this behaviour originates from the information content of the measurement statistics rather than from artefacts of the reconstruction procedure such as convergence, we additionally analyse the Fisher information directly along the recovery process. The Fisher diagnostics confirm that the measurement data indeed exhibit the theoretical Fisher-information behaviour predicted by the analysis, including both the generic spectral activation resulting in quadratic Fisher scaling and predicted cumulative scaling and, secondly, the diagonalization of the Fisher information matrix. In addition, we examine how these mechanisms behave as the system size increases, and validate their size independent persistence. This separation between information availability and reconstruction performance demonstrates that the improved learning behaviour is rooted in the intrinsic information structure of the measurement statistics rather than in particular properties of the optimization procedure.

Together, these results establish that beyond-SQL learning rates in Hamiltonian learning can be reached without entanglement or coherent control, using experimentally accessible tools compatible with near-term devices. This provides a practical path toward resource-efficient quantum characterization and adaptive metrology. Future work will focus on quantifying robustness to state-preparation and measurement errors, extending state-spreading techniques to time-dependent Hamiltonians, and integrating compressed-sensing or tensor-network approaches for larger systems.

\bibliographystyle{naturemag}  
\bibliography{references}

\section*{Acknowledgments}

We would like to thank Jos\'e Ramon-Martinez, Marcin P\l{}odzie\'{n}, Carlos Pascual, and Antonio Ac\`in for their helpful comments and questions whilst proof-reading the manuscript.

B. Baran acknowledges support from AIDAS-AI, Data Analytics
and Scalable Simulation, which is a Joint Virtual Laboratory gathering between the Forschungszentrum Jülich and the French Alternative Energies and Atomic Energy Commission (CEA).

T. Heightman acknowledges support from the Government of Spain (Severo Ochoa CEX2019-000910-S, Quantum in  Spain, FUNQIP and European Union NextGenerationEU PRTR- C17.I1), the European Union (PASQuanS2.1, 1011 13690 and Quantera Veriqtas), Fundació Cellex, Fundació Mir- Puig, Generalitat de Catalunya (CERCA program), the  ERC AdG CERQUTE and the AXA Chair in Quantum  Information Science.

The authors declare no competing financial interest.

\section*{Author Contributions}
B.B. conceived the initial solution and led the development of the theoretical framework. T.H. conceived the research question and contributed significantly to the maturation of the theory. Both authors jointly prepared and finalized the manuscript.

\appendix

\section{Bachmann-Landau Notation}
\label{app:landau_notation}
In this work we will make use of the Bachmann–Landau notation also known as the asymptotic notation, particularly the big O notation, the big Omega notation and the big Theta notation \cite{howell2008asymptotic}.

For example, consider a function \(f(t)\) and the exponent \(x\):
\begin{equation}
\begin{aligned}
f(t) &= O(t^x) : \exists\,C > 0,\; t_0,\;\; f(t) \le C\,t^x \quad \forall\,t \ge t_0,\\
f(t) &= \Omega(t^x) : \exists\,c > 0,\; t_0,\;\; f(t) \ge c\,t^x \quad \forall\,t \ge t_0,\\
f(t) &= \Theta(t^x) : f(t) = O(t^x) \;\text{and}\; f(t) = \Omega(t^x).
\end{aligned}
\end{equation}

\section{Spectral Origin of the Quadratic Fisher Window}
\label{app:spectral_fisher_window}
\begin{proof}

Expanding the time-evolved state and measurement basis in the Hamiltonian eigenbasis yields
\begin{align}
\ket{\psi_t} &= \sum_k a_k e^{-i\lambda_k t}\ket{\lambda_k}, \\
\ket{m_j} &= \sum_k c_{jk}\ket{\lambda_k}.
\end{align}

The measurement probability becomes
\begin{equation}
p_j(t)
=
\left|\langle m_j \mid \psi_t \rangle\right|^2
=
\sum_{k,\ell}
c_{jk}^* c_{j\ell} a_k a_\ell^*
e^{-i(\lambda_k-\lambda_\ell)t}.
\end{equation}

The dynamics therefore consists of sinusoidal/harmonic oscillations with frequencies given by the spectral gaps \begin{equation}
\Delta_{k\ell} = \lambda_k - \lambda_\ell.
\end{equation}
Large spectral gaps generate rapid oscillations, while small gaps determine the longest oscillation periods.A spectral-gap contribution $(k,\ell)$ will be called \emph{activated} if its coefficient $c_{jk}^* c_{j\ell} a_k a_\ell^*$ is nonzero, so that the corresponding oscillatory term contributes to $p_j(t)$.

Differentiating with respect to the Hamiltonian parameter gives
\begin{equation}
\partial_\theta p_j(t)
=
T_{\mathrm{vec}}(t)+T_{\mathrm{val}}(t),
\end{equation}
where
\begin{equation}
T_{\mathrm{vec}}(t)
=
\sum_{k,\ell}
\partial_\theta(c_{jk}^* c_{j\ell} a_k a_\ell^*)
e^{-i(\lambda_k-\lambda_\ell)t},
\end{equation}
and
\begin{equation}
T_{\mathrm{val}}(t)
=
-it
\sum_{k,\ell}
c_{jk}^* c_{j\ell} a_k a_\ell^*
e^{-i(\lambda_k-\lambda_\ell)t}\partial_\theta(\lambda_k-\lambda_\ell).
\end{equation}

The term $T_{\mathrm{vec}}(t)$ remains bounded in time, whereas
$T_{\mathrm{val}}(t)$ carries an explicit prefactor $t$. In particular,
$T_{\mathrm{val}}(t)$ is a sum of oscillatory gap contributions of the form
\(
c_{jk}^* c_{j\ell} a_k a_\ell^*
e^{-i(\lambda_k-\lambda_\ell)t}\partial_\theta(\lambda_k-\lambda_\ell).
\)
For times
\begin{equation}
t \le \frac{\pi}{2\,\Delta\lambda_{\max}^{\mathrm{act}}},
\end{equation}
all activated oscillatory factors remain in their pre-extremal regime, so
their phase-dependent contributions vary smoothly. 

For spread states with Haar-random coefficients, a spectral gap fails to be activated happens with measure-zero due to the nature of the Haar distribution, see (Lemma~\ref{lem:generic_spectral_activation}). Thus the oscillatory sum contains all gap modes.

Persistent destructive interference among these activated oscillatory terms would require fine-tuned relations among the coefficients and spectral gaps, but since the coefficients inherit a continuous Haar-random distribution from the spread state, such relations occur only on a measure-zero subset of
coefficient space. Consequently the oscillatory sum
\[
F_j(t)=
-i\sum_{k,\ell}
c_{jk}^*c_{j\ell}a_ka_\ell^*
e^{-i\Delta_{k\ell}t}\partial_\theta\Delta_{k\ell}
\]
remains generically bounded away from zero on this pre-extremal interval, i.e.\
\(F_j(t)=\Omega(1)\). Hence \(T_{\mathrm{val}}(t)=\Omega(t)\).

Since $T_{\mathrm{vec}}(t)=O(1)$ remains bounded while
$T_{\mathrm{val}}(t)=\Omega(t)$ grows linearly in $t$, the eigenvalue
contribution dominates on this window, and therefore
\begin{equation}
\partial_\theta p_j(t)=\Omega(t).
\end{equation}

Substituting into the classical Fisher information
\begin{equation}
\mathcal I_C(t)
=
\sum_j
\frac{(\partial_\theta p_j(t))^2}{p_j(t)},
\end{equation}
and using that \(0<p_j(t)\le 1\), so that
\begin{equation}
\frac{(\partial_\theta p_j(t))^2}{p_j(t)}
\ge
(\partial_\theta p_j(t))^2,
\end{equation}
therefore yields, generically,
\begin{equation}
\mathcal I_C(t)=\Omega(t^2)
\end{equation}
on the same pre-extremal temporal window on which
\(\partial_\theta p_j(t)=\Omega(t)\), namely
\begin{equation}
t \in
\left[
0,
\frac{\pi}{2\,\Delta\lambda_{\max}^{\mathrm{act}}}
\right].
\end{equation}

\end{proof}

\section{Lemma: Generic Spectral Activation }
\label{lem:generic_spectral_activation}
Let the spread state be
\(
\ket{\psi_{\rm spread}}
=
\bigotimes_{r=1}^n U_r \ket{\phi_r},
\)
where each \(U_r\sim\mathrm{Haar}(\mathrm{SU}(2))\) is sampled independently and \(\ket{\phi_r}\) are fixed single-qubit states. Let \(H\ket{\lambda_k}=\lambda_k\ket{\lambda_k}\) be the spectral decomposition of an \(n\)-qubit Hamiltonian. Then the spread state can be expanded in the Hamiltonian eigenbasis as
\(
\ket{\psi_{\rm spread}}
=
\sum_k a_k \ket{\lambda_k},
\,\,
a_k=\langle\lambda_k|\psi_{\rm spread}\rangle .
\)
Then, with overwhelming probability, over the choice of the local Haar-random unitaries \(\{U_r\}\), the amplitudes \(a_k\) are nonzero for all \(k\).

\begin{proof}
Let
\(
\ket{\psi_{\rm spread}}
=
\bigotimes_{j=1}^n U_j\ket{\phi_j},
\)
where each \(U_j\sim \mathrm{Haar}(\mathrm{SU}(2))\) is sampled independently and the states \(\ket{\phi_j}\) are fixed single-qubit states.

For any fixed Hamiltonian eigenstate \(\ket{\lambda_k}\), define the overlap
\begin{equation}
a_k
=
\langle \lambda_k | \psi_{\rm spread} \rangle
=
\left\langle \lambda_k \middle| \bigotimes_{j=1}^n U_j \middle| \bigotimes_{j=1}^n \phi_j \right\rangle
=: F_k(U_1,\dots,U_n).
\end{equation}

The function \(F_k\) is continuous in the sampled local unitaries. Moreover, \(F_k\) is not identically zero: varying the local unitaries \(\{U_j\}\) generates arbitrary product states, and product states span the full \(n\)-qubit Hilbert space. Therefore \(\ket{\lambda_k}\) cannot be
orthogonal to all such states, implying that \(F_k\not\equiv 0\).

Since \(\mathrm{SU}(2)\) is uncountable and equipped with a continuous Haar measure, the probability of sampling a tuple of local unitaries such that \(F_k(U_1,\dots,U_n)=0\) is zero, because the corresponding zero set is negligible relative to the full parameter space:
\begin{equation}
\mathbb{P}[a_k=0]
=
\mathbb{P}[F_k(U_1,\dots,U_n)=0]
=
0 .
\end{equation}

Finally, since there are only finitely many eigenstates \(\ket{\lambda_k}\), the union over all \(k\) of these probability-zero events still has total probability zero:
\begin{equation}
\mathbb{P}\!\left[\exists\,k:\,a_k=0\right]
\le
\sum_k \mathbb{P}[a_k=0]
=
0 .
\end{equation}

Hence, with probability one over the choice of the independently sampled local Haar-random unitaries, the amplitudes
\(
a_k=\langle\lambda_k|\psi_{\rm spread}\rangle
\)
are non-zero for all \(k\).
\end{proof}

\section{Lemma: Generic Sensitivity}
\label{lemma:generic_sensitivity}

Let \( \ket{\psi_0} = \bigotimes_{j=1}^n U_j \ket{\phi_j} \), where \( \ket{\phi} = \bigotimes_{j=1}^n \ket{\phi_j} \) is any fixed product eigenstate of a Pauli string Hamiltonian, and each \( U_j \sim \text{Haar}(\mathrm{SU}(2)) \) independently. Let \( \ket{b} = \bigotimes_{j=1}^n \ket{b_j} \) be a fixed computational basis state (or more generally, a Pauli product eigenstate). Then, with overwhelming probability, the choice of the \( \{U_j\} \), the overlaps \( \langle b | \psi_0 \rangle \) and \( \langle b | P_a | \psi_0 \rangle \) are non-zero for all \( a \), so that no term in the Hamiltonian is missed at first order in time, meaning that the measurement probability is sensitive to every Hamiltonian parameter.
\begin{proof}
Let \( \ket{\psi_0} = \bigotimes_{j=1}^n U_j \ket{\phi_j} \), where \( \ket{\phi} = \bigotimes_{j=1}^n \ket{\phi_j} \) is any fixed product eigenstate of a Pauli string Hamiltonian, and each \( U_j \sim \text{Haar}(\mathrm{SU}(2)) \) independently. Let \( \ket{b} = \bigotimes_{j=1}^n \ket{b_j} \) be a fixed computational basis state (or more generally, a Pauli product basis).

For any fixed Pauli string \( P_a = \sigma_{a_1} \otimes \cdots \otimes \sigma_{a_n} \) and fixed Pauli product basis state \( \ket{b} = \bigotimes_{j=1}^n \ket{b_j} \), the overlap can be written as
\begin{equation}
\langle b | P_a | \psi_0 \rangle 
% = \bigotimes_{j=1}^n \left( \langle b_j | \sigma_{a_j} U_j | \phi_j \rangle \right)
= \prod_{j=1}^n \langle b_j | \sigma_{a_j} U_j | \phi_j \rangle =: \prod_{j=1}^n f_j(U_j).
\end{equation}

Since \( \mathrm{SU}(2) \) is uncountable and with a continuous Haar measure, the probability of sampling a unitary \( U_j \) such that \( f_j(U_j) = 0 \) is zero, and because the zero set is negligible relative to the rest of the space:
\begin{equation}
\mathbb{P}_{U_j}[f_j(U_j) = 0] = 0.
\end{equation}

Furthermore, since the \( U_j \) are sampled independently, the total overlap vanishes only if at least one of the \( f_j(U_j) \) vanishes. Thus,
\begin{equation}
\mathbb{P} \left[ \langle b | P_a | \psi_0 \rangle = 0 \right] \leq \sum_{j=1}^n \mathbb{P}[f_j(U_j) = 0] = 0.
\end{equation}

The same argument applies to the overlap \( \langle b | \psi_0 \rangle \), which corresponds to the case \( P_a = I \). Since the set of Pauli strings \( P_a \) is finite (with cardinality \( 4^n \)), the union over all \( a \) of the probability-zero events where \( \langle b | P_a | \psi_0 \rangle = 0 \) still has total probability zero:
\begin{equation}
\mathbb{P} \left[ \exists a : \langle b | P_a | \psi_0 \rangle = 0 \right] = 0.
\end{equation}

Hence, both \( \langle b | \psi_0 \rangle \) and all \( \langle b | P_a | \psi_0 \rangle \) are zero with probability zero. Therefore, the first-order-in-time measurement signal
\begin{equation}
p(b,t) \approx |\langle b | \psi_0 \rangle|^2 - 2t\, \mathrm{Im} \left[ \sum_a \theta_a \langle b | \psi_0 \rangle^* \langle b | P_a | \psi_0 \rangle \right]
\end{equation}
contains non-zero contributions from every \( \theta_a \), except on a measure-zero subset of initial states.
\end{proof}

\begin{remark}
Although the initial state \( \ket{\psi_0} = \bigotimes_j U_j \ket{0} \) is constructed using only local unitaries, the result holds for \emph{all} Pauli-string Hamiltonian components \( P_a \), including those that act nontrivially on multiple qubits. This is because the sensitivity condition concerns global overlaps such as \( \langle b | P_a | \psi_0 \rangle \), which are nonzero even when \( P_a \) is nonlocal. 

For instance, in a two-qubit system with \( P_a = X \otimes Y \) and \( \ket{b} = \ket{00} \), the overlap 
\begin{equation}
\langle 00 | (X \otimes Y) | \psi_0 \rangle = \langle 0 | X U_1 | 0 \rangle \cdot \langle 0 | Y U_2 | 0 \rangle
\end{equation}
is the product of two independent random complex numbers, each nonzero with probability one. Hence, the global overlap is generically nonzero even in the case of \( P_a \) being nonlocal.

\end{remark}

\section{Fisher Information Diagonalization}
\label{lemma:isotropy_spreading}
\begin{proof}
We consider the classical Fisher information matrix associated with measurement outcome probabilities \( \{p_i^{(r)}(\theta)\} \), obtained from an ensemble of experiments. Let \( H(\theta) \) be a Hamiltonian parametrized by real parameters \( \theta = (\theta_1, \dots, \theta_d) \),  so that \( H(\boldsymbol{\theta}) = \sum_j \theta_j H_j \). In each experiment, the initial state \( |\psi_r\rangle \) is generated by applying independent single-qubit unitaries \( U_\ell^{(r)} \in \mathrm{SU}(2) \) to a fixed product reference state \( |0\rangle^{\otimes n} \), i.e.,
\(
|\psi\rangle = \bigotimes_{\ell=1}^n U_\ell |0\rangle, \quad \text{with each } U_\ell \sim \text{Haar on } \mathrm{SU}(2).
\)
This defines an ensemble of spread states via local Haar-random sampling.\\ 

Each state is then evolved under the Hamiltonian \( H(\theta) = \sum_j \theta_j H_j \) for a short time \( t > 0 \), yielding
\begin{equation}
|\psi_r(t)\rangle = e^{-i H(\theta) t} |\psi_r\rangle.
\end{equation}
Measurement is performed in a fixed Pauli product basis with projectors \( \{ \Pi_i \} \), and the outcome probabilities are given by
\begin{equation}
p_i^{(r)}(\theta) = \langle \psi_r(t) | \Pi_i | \psi_r(t) \rangle.
\end{equation}

For small evolution times \( t > 0 \), the measurement probabilities admit a first-order expansion:
\begin{equation}
p_i^{(r)}(\theta) = \langle \psi_r | \Pi_i | \psi_r \rangle + i t \langle \psi_r | [H(\theta), \Pi_i] | \psi_r \rangle + O(t^2),
\end{equation}
so that the derivatives with respect to the each of Hamiltonians \(H(\boldsymbol{\theta}) = \sum_j \theta_j H_j,\) coefficients can be written as:
\begin{equation}
\frac{\partial p_i^{(r)}}{\partial \theta_j} =i t \langle \psi_r | [H_j, \Pi_i] | \psi_r \rangle + O(t^2).
\end{equation}

The Fisher information matrix for the \( r \)-th member of the ensemble is:
\begin{equation}
[\mathcal{I}_r(\theta)]_{jk} = \sum_i \frac{1}{p_i^{(r)}(\theta)} \frac{\partial p_i^{(r)}}{\partial \theta_j} \frac{\partial p_i^{(r)}}{\partial \theta_k}.
\end{equation}

Substituting the linear expansion in \( t \), we obtain:
\begin{equation}
[\mathcal{I}_r(\theta)]_{jk} = t^2 \sum_i \frac{1}{p_i^{(r)}(\theta)} \langle \psi_r | [H_j, \Pi_i] | \psi_r \rangle \langle \psi_r | [H_k, \Pi_i] | \psi_r \rangle + O(t^3).
\end{equation}

We now average over the ensemble of Haar-random product states:
\begin{equation}
\mathbb{E}_\psi[\mathcal{I}_{jk}(\theta)] = t^2 \sum_i \mathbb{E}_\psi \left[ \frac{1}{p_i^{(r)}(\theta)} \langle \psi_r | [H_j, \Pi_i] | \psi_r \rangle \langle \psi_r | [H_k, \Pi_i] | \psi_r \rangle \right] + O(t^3).
\end{equation}

To understand the scaling of the remainder term, we note that for \(k\)-local Hamiltonians each term \(H_j\) acts on at most \(k\) qubits. Since \(\Pi_i\) is a Pauli-product projector, the commutator \([H_j,\Pi_i]\) acts only on their overlapping qubits and its Pauli expansion therefore contains at most \(O(4^k)\) non-zero terms, which depends only on \(k\) and not on the system size.

The higher-order corrections involve nested commutators
\(
[H,[H,\Pi_i]]=\sum_{a,b}[h_a,[h_b,\Pi_i]],
\)
where \(H=\sum_a h_a\) denotes the decomposition of the Hamiltonian into local interaction terms. Only pairs \((a,b)\) whose supports overlap with that of \(\Pi_i\) contribute, since Pauli operators on disjoint supports commute. Because the interaction order and connectivity remain bounded, the number of such contributions depends only on \(k\). Consequently
\(
\|[H,[H,\Pi_i]]\| = O(4^k)
\)
independently of the system size \(N\), so the higher-order remainder does not scale with the Hilbert-space dimension \(d=2^N\).

Each commutator \( [H_j, \Pi_i] \) is a Hermitian operator that can be expanded as a linear combination of Pauli strings:
\begin{equation}
[H_j, \Pi_i] = \sum_\alpha c_\alpha^{(j,i)} Q_\alpha,
\end{equation}
where \( Q_\alpha \in \mathcal{P}_n \) are \( n \)-qubit Pauli strings and \( c_\alpha^{(j,i)} \in \mathbb{R} \) are coefficients depending on the structure of \( H_j \) and \( \Pi_i \).

Thus, the product of expectation values appearing in the Fisher information becomes a sum over Pauli string contributions:
\begin{equation}
\langle \psi | [H_j, \Pi_i] | \psi \rangle \langle \psi | [H_k, \Pi_i] | \psi \rangle = \sum_{\alpha, \beta} c_\alpha^{(j,i)} c_\beta^{(k,i)} \langle \psi | Q_\alpha | \psi \rangle \langle \psi | Q_\beta | \psi \rangle.
\end{equation}

Due to the unitary invariance of the Haar measure on \( \mathrm{SU}(2) \) and its independent application across qubits, the ensemble of one-local Haar-random product states satisfies \( \mathbb{E}_\psi[\langle \psi | Q | \psi \rangle] = 0 \) for any non-identity Pauli string \( Q \), and \( \mathbb{E}_\psi[\langle \psi | Q_\alpha | \psi \rangle \cdot \langle \psi | Q_\beta | \psi \rangle] = 0 \) for \( Q_\alpha \neq Q_\beta \).

Hence it follows that for \( j \ne k \), the expectation of the product of commutator terms vanishes in the ensemble average, leading to:
\begin{equation}
\mathbb{E}_\psi[\mathcal{I}_{jk}(\theta)] = 0 \quad \text{for } j \ne k,
\end{equation}

For the diagonal entries \( \mathcal{I}_{jj} \), the Fisher information contains terms of the form
\begin{equation}
\mathbb{E}_\psi\left[\langle \psi | Q_\alpha | \psi \rangle^2\right],
\end{equation}
which correspond to the variances of Pauli string expectation values. Since \( Q_\alpha \) is nontrivial and \( |\psi\rangle \) is locally Haar-random on each qubit, these variances are strictly positive. As a result,
\begin{equation}
\mathbb{E}_\psi[\mathcal{I}_{jj}(\theta)] > 0.
\end{equation}

By the matrix law of large numbers, the empirical average over \( R \) independent members of the ensemble converges to the expectation,
\begin{equation}
\lim_{R \to \infty} \left( \frac{1}{R} \sum_{r=1}^R \mathcal{I}_r(\theta) \right) = \mathrm{diag}(c_1, \dots, c_d), \quad \text{with } c_j > 0.
\end{equation}
\end{proof}

\section{Cumulative Fisher Information Scaling with Total Experiment Time}

\label{app:cumulative_fisher_information_scaling}

\begin{proof}
We consider sampling times \(t_k = \Delta t\, k^\alpha\), with \(\alpha > -1\). The total experiment time is
\begin{equation}
T_{\rm tot} = \Delta t \sum_{k=1}^{m_t} k^\alpha.
\end{equation}

As \(m_t \to \infty\), this sum satisfies
\begin{equation}
\sum_{k=1}^{m_t} k^\alpha
= \int_1^{m_t} x^\alpha\,dx + O(m_t^\alpha)
= \frac{m_t^{\alpha+1}}{\alpha+1}\bigl(1+o(1)\bigr)
\quad(m_t\to\infty),\qquad \alpha>-1,
\end{equation}
so we may write
\begin{equation} \label{eq:sum_approx}
\sum_{k=1}^{m_t} k^\alpha = \frac{m_t^{\alpha+1}}{\alpha+1} + O(m_t^\alpha),
\end{equation}
as $m_t \rightarrow \infty$, and thus
\begin{equation}
T_{\rm tot} = \Delta t \cdot \frac{m_t^{\alpha+1}}{\alpha+1} + O(m_t^\alpha).
\end{equation}

Solving for \(m_t\), we find the leading-order behavior:
\begin{equation}
m_t \sim \left( \frac{(\alpha+1)\, T_{\rm tot}}{\Delta t} \right)^{1/(\alpha+1)}.
\end{equation}

Next, consider the case where the Fisher information from time \(t\) scales as \(I(t) = \Theta(t^{\gamma_0})\) for each time stamp \(t_k\). Then the total Fisher information becomes
\begin{equation}
I_{\rm tot} = \sum_{k=1}^{m_t} t_k^{\gamma_0}
= (\Delta t)^{\gamma_0} \sum_{k=1}^{m_t} k^{\alpha\gamma_0}.
\end{equation}

Using a similar expansion as in Equation~\ref{eq:sum_approx}, we approximate:
\begin{equation}
\sum_{k=1}^{m_t} k^{\alpha\gamma_0} = \frac{m_t^{\alpha\gamma_0 + 1}}{\alpha\gamma_0 + 1} + O(m_t^{\alpha\gamma_0}),
\end{equation}
so the total Fisher information becomes
\begin{equation}
I_{\rm tot} = \Delta t^{\gamma_0} \cdot \frac{m_t^{\alpha\gamma_0 + 1}}{\alpha\gamma_0 + 1} + O(m_t^{\alpha\gamma_0}).
\end{equation}

Substituting the expression for \(m_t\) in terms of \(T_{\rm tot}\), we obtain
\begin{equation}
I_{\rm tot}
= C\, T_{\rm tot}^{\,p} \left(1 + \epsilon(m_t)\right),
\end{equation}
where
\begin{equation}
p = \frac{\alpha\gamma_0 + 1}{\alpha + 1}, \qquad \epsilon(m_t) = O(m_t^{-1}).
\end{equation}

To absorb the finite-\(m_t\) correction into the exponent, we define an effective exponent:
\begin{equation}
p_{\rm eff}(m_t)
:= \frac{\ln\bigl(I_{\rm tot}/C\bigr)}{\ln T_{\rm tot}}
= p + \frac{\ln(1+\epsilon(m_t))}{\ln T_{\rm tot}}.
\end{equation}

Here \(p_{\rm eff}\) is written as a function of \(m_t\) because the asymptotic analysis is performed for a fixed sampling schedule \(t_k=\Delta t\,k^\alpha\), with fixed \(\alpha>-1\) and fixed \(\Delta t>0\). Once this schedule is chosen, the experiment is extended by increasing only the number of sampling times \(m_t\). The total experiment time \(T_{\rm tot}\) is then the induced cumulative resource associated with this schedule. Equivalently, one may regard \(p_{\rm eff}\) as a function of \(T_{\rm tot}\), since \(T_{\rm tot}\) and \(m_t\) are asymptotically related through Eq.~\eqref{eq:sum_approx}.

Using the expansion \(\ln(1 + \epsilon(m_t)) = \epsilon(m_t) + O(\epsilon(m_t)^2)\) and noting that \(\ln T_{\rm tot} = \Theta(\ln m_t)\), we conclude
\begin{equation}
p_{\rm eff}(m_t)
= \frac{\alpha\gamma_0 + 1}{\alpha + 1} + O(m_t^{-1}).
\end{equation}

This yields the observed scaling:
\begin{equation}
I_{\rm tot} \approx T_{\rm tot}^{\,p_{\rm eff}(m_t)},
\end{equation}
with the effective exponent incorporating finite-\(m_t\) corrections.
\end{proof}

\section{Hamiltonian Recovery Method}
\label{sec:recovery}

To be able to numerically validate our previous theoretical claims, we consider probe‐state trajectories to learn Hamiltonians from, similar to those in \cite{wilde2022scalably, heightman2024solving, dutkiewicz2023advantage, hu2025ansatz}. While these works also leverage maximum‐likelihood estimation to fit model parameters, they focus on optimizing variational parameters in a fixed ansatz for the Hamiltonian in the so-called white-box scenario \cite{heightman2024solving}. In contrast, here we make no such structural assumptions, and directly optimize the Hamiltonian’s matrix entries and optimize to maximize the likelihood of measurement data obtained from the true Hamiltonian. The full implementation is available on GitHub \cite{githubrepo}.

% \subsection{Dataset Generation}
% \label{sec:datageneration}

We generate a dataset by evolving the system under a true Hamiltonian \(H_{\rm true}\) as follows:
\begin{enumerate}
  \item Reference state: 
      Prepare a fixed reference state, which we chose to be $ \ket{\psi_0} = \ket{0}^{\otimes N}$ without loss of generality \(\ket{\psi_0}\) with \(\rho_0 = \ket{\psi_0}\bra{\psi_0}\).
  \item Initial State Spreading:  
        Apply locally Haar‐random rotations
        \begin{equation}
          U_{\rm spread}
          = \bigotimes_{i=1}^n R_z(\xi_i)\,R_y(\chi_i)\,R_z(\phi_i),
        \end{equation}
        where 
        \(\chi_i = \arccos(1 - 2u_i)\), \(u_i\sim U[0,1]\), and 
        \(\phi_i,\xi_i\sim U[0,2\pi]\),
        yielding \(\rho_{\rm spread}=U_{\rm spread}\,\rho_0\,U_{\rm spread}^\dagger\).
  \item Time evolution:  
        Evolve \(\rho_{\rm spread}\) with the true Hamiltonian  \(H_{true}\) for a sequence of times
        \begin{equation}
          t_k = \Delta t\,k^\alpha,\quad k=1,2,\dots,m_t,
        \end{equation}
        where \(\Delta t>0\) and \(\alpha>-1\).  At each \(t_k\),
        \begin{equation}
          \rho(t_k)
          = e^{-iH_{\rm true}t_k}\,\rho_{\rm spread}\,e^{iH_{\rm true}t_k}.
        \end{equation}
  \item Measurement:  
        At each \(t_k\), measure in a random product Pauli basis
        \(p_j \in\{X,Y,Z\}^n\) and record the bit‑string outcomes \(s\).
\end{enumerate}

%This dataset of measurements is then used in the Hamiltonian recovery based on maximum likelihood estimation given in Appendix~\ref{sec:recovery}.

We use the empirical bit‑string distribution \(P_{\rm data}(b)\) obtained under a chosen configuration, comprising the initial‑state ensemble (via state spreading), evolution‑time schedule \((m_t,\alpha)\), product Pauli measurement bases, and repetition count. We then recover the Hamiltonian by  likelihood estimation over its independent matrix entries, matching the distribution simulated under the candidate Hamiltonian to \(P_{\rm data}\). The Hamiltonian estimate is parametrized by constructing \(\hat H(\theta)\) from a complex lower‑triangular map \(A(\theta)\) such that

\begin{equation}
          \hat H_{ij}(\theta)=
          \begin{cases}
            A_{ij}(\theta), & i \ge j,\\
            {A_{ji}^*(\theta)}, & i < j,
          \end{cases}
\end{equation}
enforcing Hermiticity in one step. We then use an extended parameter embedding for expressivity as follows: Embed the \(n^2\) real degrees of freedom into a slightly higher‑dimensional vector \(\theta\), linked to \(A(\theta)\) via a fixed neural-network architecture to smooth the optimization landscape (see Appendix~\ref{app:extended_parameter_embedding} for a detailed description). Each  likelihood estimation iteration then proceeds as follows:
\begin{enumerate}
  \item Simulating the experiment under the current estimate \(\hat H(\theta)\): for each initial state in \(\{\ket{\psi_{r}}\}_{r=1}^R\), evolve for times \(t_k\) and measure in the corresponding product Pauli basis \(p_j\), repeating \(s=1,\dots,S\) shots.
  \item Collecting simulated bit‑string outcomes \(b_{rkjd}\) to form the model distribution \(P_{\hat H(\theta)}(b)\).
\end{enumerate}

Based on the dataset \(D\) of size \(\lvert D\rvert = R\,K\,m_t\,S\), with entries indexed by \((r, j, k, s)\) and outcomes \(b_{rjks}\), we define the negative log-likelihood loss:
\begin{equation} \label{eq:loss}
  \mathcal{L}_D(\theta)
  = -\frac{1}{R J m_t S}\sum_{r=1}^{R}\sum_{k=1}^{J}\sum_{k=1}^{m_t}\sum_{s=1}^{S}
    \log P\bigl(b_{rjks}\mid t_k,p_j,\psi^{(r)}_0,\theta\bigr),
\end{equation}
where,
\begin{equation} \label{eq:meas_prob}
  P\bigl(b_{rjks}\mid \psi^{(r)}_0, p_j, t_k,\theta\bigr)
  = \bigl|\langle b_{rjks}\mid e^{-i\hat H(\theta)\,t_k}\mid\psi^{(r)}_0\rangle\bigr|^2.
\end{equation}
Gradients of \(\mathcal{L}_D(\theta)\) computed via back-propagation through the embedding and lower‑triangular map to update \(\theta\).  Minimizing \(\mathcal{L}_D(\theta)\) yields the Hermitian matrix that best reproduces the observed measurement statistics without imposing any additional bias. Once \(\mathcal{L}_D(\theta)\) has converged, we terminate the optimization and compute the reconstruction error via,
\begin{equation}
\label{eq:evaluation}
\varepsilon = \frac{1}{n^2}\sum_{i,j} \bigl| H^{\mathrm{true}}_{ij} - \hat{H}(\theta)_{ij}\bigr|,
\end{equation}
where \(H^{\mathrm{true}}\) and \(\hat H(\theta)\) denote the true and recovered Hamiltonian matrices, respectively.

\section{Extended Parameter Embedding}
\label{app:extended_parameter_embedding}

To smooth the optimization landscape and mitigate spurious local minima, while keeping a fully agnostic representation of each Hamiltonian entry, we embed the \(n^2\) real parameters into a slightly higher‑dimensional space via a compact feedforward network.  Crucially, the network’s input is fixed, so it does not learn a mapping but solely serves as an “extended parameter embedding” that outputs the Hamiltonian matrix entries directly in the Pauli basis.  Because this embedding is implemented as a standard neural network, its parameters can be updated via back‑propagation, avoiding the extra computational overhead that would arise from explicitly managing a larger parameter vector.\\

In the following we detail the executed adjustments.
We feed the network a constant vector \(x\in\mathbb{R}^{n^2}\) with uniform entries
    \begin{equation}
      x_\ell = \left(\frac{c}{\mathrm{dim}}\right)^p,\quad \mathrm{dim}=n^2,
    \end{equation}
typically \(c=1\), \(p=0\).  Because \(x\) never changes, the network parameters \(\theta\) fully determine the output, making the network a pure embedding of \(\theta\).\\
A two‑hidden‑layer feedforward network with \(\tanh\) activations, the first hidden layer containing 200 nodes and the second one containing 400 nodes, maps the fixed \(x\) through modest‑width layers to a real‑valued output of dimension \(n(n+1)/2\).\\
The network’s output vector is interpreted as the entries of a complex lower‑triangular matrix \(A(\theta)\).  The full Hermitian estimate \(\hat H(\theta)\) is then constructed by
\begin{equation}
  \hat H_{ij}(\theta)=
  \begin{cases}
    A_{ij}(\theta), & i\ge j,\\
    \overline{A_{ji}(\theta)}, & i<j,
  \end{cases}
\end{equation}
enforcing self‑adjointness in one step.

\section{Fitted Error-Scaling Exponents Spread-State Ensemble Sizes}
\label{app:spread_scaling_data}

In this appendix, we provide the full set of fitted exponents \(\beta \pm \delta\beta\) for each Hamiltonian family and ensemble size used in Figure~\ref{fig:scaling_state_spreading}. The exponents were obtained by fitting the reconstruction error \(\varepsilon\) to a power-law decay \(\varepsilon \propto T_{\rm tot}^{-\beta}\), where \(T_{\rm tot}\) denotes the total experiment time and \(\beta\) characterizes the scaling behaviour with increasing spread-state ensemble size \(R\).

\begin{table}[h!]
  \centering
\begin{tabular}{@{}rcccc@{}}
\toprule
\# Spreadings \(R\) & \textbf{XYZ} & \textbf{XYZ2} & \textbf{XYZ3} & \textbf{XXZ} \\
\midrule
1     & \(0.019 \pm 0.020\) & \(0.015 \pm 0.020\) & \(0.018 \pm 0.020\) & \(0.010 \pm 0.020\) \\
2     & \(0.067 \pm 0.021\) & \(0.071 \pm 0.021\) & \(0.066 \pm 0.021\) & \(0.078 \pm 0.022\) \\
4     & \(0.136 \pm 0.020\) & \(0.136 \pm 0.020\) & \(0.136 \pm 0.021\) & \(0.141 \pm 0.021\) \\
8     & \(0.244 \pm 0.018\) & \(0.250 \pm 0.018\) & \(0.254 \pm 0.018\) & \(0.253 \pm 0.019\) \\
16    & \(0.463 \pm 0.018\) & \(0.463 \pm 0.017\) & \(0.471 \pm 0.019\) & \(0.453 \pm 0.018\) \\
32    & \(0.671 \pm 0.007\) & \(0.656 \pm 0.005\) & \(0.660 \pm 0.004\) & \(0.678 \pm 0.004\) \\
64    & \(0.681 \pm 0.003\) & \(0.660 \pm 0.002\) & \(0.645 \pm 0.003\) & \(0.687 \pm 0.004\) \\
128   & \(0.711 \pm 0.003\) & \(0.661 \pm 0.003\) & \(0.646 \pm 0.003\) & \(0.706 \pm 0.003\) \\
\bottomrule
\end{tabular}

  \vspace{1em}
  \caption{Fitted error-scaling exponents \(\beta \pm \delta\beta\)  for each Hamiltonian family as a function of the spread state ensemble size $R$.}
  \label{tab:spread_scaling_data}
\end{table}

\section{Fitted Error-Scaling Exponents for Scheduling Exponent Sweep}
\label{app:alpha_scaling_data}

This appendix provides the tabulated error-scaling exponents \(\beta \pm \delta\beta\) as a function of the measurement-time scheduling exponent \(\alpha\), used in Figure~\ref{fig:scaling_alphas}. These values were obtained by fitting the reconstruction error \(\varepsilon\) to a power law \(\varepsilon \propto T_{\rm tot}^{-\beta(\alpha)}\) for each Hamiltonian family, where \(T_{\rm tot}\) denotes total experiment time. 

\begin{table}[h!]
  \centering
\begin{tabular}{@{}ccccc@{}}
\toprule
\(\alpha\) & \textbf{XYZ} & \textbf{XYZ2} & \textbf{XYZ3} & \textbf{XXZ} \\
\midrule
0.1 & \(0.449 \pm 0.007\) & \(0.446 \pm 0.005\) & \(0.443 \pm 0.005\) & \(0.444 \pm 0.006\) \\
0.2 & \(0.476 \pm 0.009\) & \(0.472 \pm 0.006\) & \(0.478 \pm 0.006\) & \(0.480 \pm 0.009\) \\
0.3 & \(0.520 \pm 0.011\) & \(0.507 \pm 0.006\) & \(0.504 \pm 0.009\) & \(0.508 \pm 0.013\) \\
0.4 & \(0.536 \pm 0.012\) & \(0.538 \pm 0.007\) & \(0.536 \pm 0.010\) & \(0.532 \pm 0.014\) \\
0.5 & \(0.564 \pm 0.013\) & \(0.558 \pm 0.008\) & \(0.559 \pm 0.009\) & \(0.557 \pm 0.014\) \\
0.6 & \(0.584 \pm 0.012\) & \(0.577 \pm 0.009\) & \(0.578 \pm 0.010\) & \(0.581 \pm 0.009\) \\
0.7 & \(0.611 \pm 0.011\) & \(0.597 \pm 0.008\) & \(0.596 \pm 0.012\) & \(0.615 \pm 0.004\) \\
0.8 & \(0.624 \pm 0.010\) & \(0.613 \pm 0.009\) & \(0.617 \pm 0.009\) & \(0.633 \pm 0.005\) \\
0.9 & \(0.642 \pm 0.009\) & \(0.639 \pm 0.007\) & \(0.634 \pm 0.009\) & \(0.653 \pm 0.005\) \\
1.0 & \(0.658 \pm 0.009\) & \(0.648 \pm 0.008\) & \(0.657 \pm 0.006\) & \(0.672 \pm 0.006\) \\
\bottomrule
\end{tabular}

  \vspace{1em}
  \caption{Fitted exponents \(\beta \pm \delta\beta\) for each Hamiltonian family as a function of the scheduling exponent \(\alpha\).}
  \label{tab:alpha_scaling_data}
\end{table}

\section{Fisher-Trace Data }
\label{app:fisher_trace_data}

In this appendix, we provide the Fisher-trace values used in Fig.~\ref{fig:fisher_diagnostic}(a) for the representative XYZ Hamiltonian family. The table lists the total experiment time \(T_{\rm tot}\) and the corresponding trace of the classical Fisher information \(\mathrm{Tr}\,\mathcal I\) for different spread-state ensemble sizes \(R\). These values are obtained from the diagnostic experiment described in Sec.~\ref{sec:fisher_diagnostics_experiments}.

\begin{table}[h!]
  \centering
\begin{tabular}{@{}rcccccc@{}}
\toprule
\(m_t\) & \(T_{\rm tot}\) & \(R=1\) & \(R=2\) & \(R=4\) & \(R=8\) & \(R=16\) \\
\midrule
2  & \(0.03\) & \(0.7777\) & \(1.5526\) & \(3.1017\) & \(6.2063\) & \(12.3922\) \\
4  & \(0.10\) & \(4.6614\) & \(9.3026\) & \(18.5877\) & \(37.2086\) & \(74.2927\) \\
6  & \(0.21\) & \(14.1174\) & \(28.1567\) & \(56.2797\) & \(112.7355\) & \(225.0867\) \\
8  & \(0.36\) & \(31.5793\) & \(62.9476\) & \(125.8703\) & \(252.3312\) & \(503.7890\) \\
10 & \(0.55\) & \(59.4385\) & \(118.4369\) & \(236.9069\) & \(475.2232\) & \(948.8533\) \\
\bottomrule
\end{tabular}

  \vspace{1em}
  \caption{Trace of the Fisher information \(\mathrm{Tr}\,\mathcal I\) as a function of the total experiment time \(T_{\rm tot}\) for different spread-state ensemble sizes \(R\), for the representative XYZ Hamiltonian family used in Fig.~\ref{fig:fisher_diagnostic}(a).}
  \label{tab:fisher_trace_data}
\end{table}

\section{Fitted Fisher-Scaling Exponents for Spread-State Ensemble Sizes}
\label{app:fisher_exponent_spreadings}

In this appendix, we provide the fitted Fisher-scaling exponents \(p \pm \delta p\) for each Hamiltonian family and spread-state ensemble size used in Fig.~\ref{fig:fisher_diagnostic}(b). The exponents were obtained by fitting the Fisher trace to the power law
\[
\mathrm{Tr}\,\mathcal I(T_{\rm tot}) \propto T_{\rm tot}^{p},
\]
where \(T_{\rm tot}\) denotes the total experiment time.

\begin{table}[h!t]
  \centering
\begin{tabular}{@{}rcccc@{}}
\toprule
\# Spreadings \(R\) & \textbf{XYZ} & \textbf{XYZ2} & \textbf{XYZ3} & \textbf{XXZ} \\
\midrule
1  & \(1.49104 \pm 0.00088\) & \(1.48973 \pm 0.00063\) & \(1.48913 \pm 0.00054\) & \(1.48990 \pm 0.00072\) \\
2  & \(1.49036 \pm 0.00079\) & \(1.48950 \pm 0.00058\) & \(1.48853 \pm 0.00048\) & \(1.48936 \pm 0.00065\) \\
4  & \(1.49078 \pm 0.00087\) & \(1.48928 \pm 0.00059\) & \(1.48867 \pm 0.00053\) & \(1.48980 \pm 0.00069\) \\
8  & \(1.49164 \pm 0.00101\) & \(1.49017 \pm 0.00071\) & \(1.48981 \pm 0.00065\) & \(1.49073 \pm 0.00076\) \\
16 & \(1.49162 \pm 0.00102\) & \(1.48992 \pm 0.00069\) & \(1.48949 \pm 0.00062\) & \(1.49096 \pm 0.00083\) \\
\bottomrule
\end{tabular}

  \vspace{1em}
  \caption{Fitted Fisher-scaling exponents \(p \pm \delta p\) for each Hamiltonian family as a function of the spread-state ensemble size \(R\).}
  \label{tab:fisher_exponent_spreadings}
\end{table}

\section{Fitted Fisher-Scaling Exponents for Scheduling-Exponent Sweep}
\label{app:fisher_exponent_alpha}

This appendix provides the fitted Fisher-scaling exponents \(p \pm \delta p\) as a function of the measurement-time scheduling exponent \(\alpha\), used in Fig.~\ref{fig:fisher_alpha_diagnostic}. These values were obtained by fitting the Fisher trace to the power law
\[
\mathrm{Tr}\,\mathcal I(T_{\rm tot}) \propto T_{\rm tot}^{p}
\]
for each Hamiltonian family at fixed spread-state ensemble size \(R=1\).

\begin{table}[h!]
  \centering
\begin{tabular}{@{}ccccc@{}}
\toprule
\(\alpha\) & \textbf{XYZ} & \textbf{XYZ2} & \textbf{XYZ3} & \textbf{XXZ} \\
\midrule
0.2 & \(1.13603 \pm 0.00328\) & \(1.13602 \pm 0.00328\) & \(1.13602 \pm 0.00328\) & \(1.13597 \pm 0.00328\) \\
0.4 & \(1.25283 \pm 0.00437\) & \(1.25278 \pm 0.00436\) & \(1.25277 \pm 0.00435\) & \(1.25267 \pm 0.00435\) \\
0.6 & \(1.34961 \pm 0.00391\) & \(1.34945 \pm 0.00388\) & \(1.34936 \pm 0.00385\) & \(1.34929 \pm 0.00388\) \\
0.8 & \(1.42816 \pm 0.00260\) & \(1.42769 \pm 0.00250\) & \(1.42732 \pm 0.00241\) & \(1.42754 \pm 0.00252\) \\
1.0 & \(1.49104 \pm 0.00088\) & \(1.48973 \pm 0.00063\) & \(1.48913 \pm 0.00054\) & \(1.48990 \pm 0.00072\) \\
\bottomrule
\end{tabular}

  \vspace{1em}
  \caption{Fitted Fisher-scaling exponents \(p \pm \delta p\) for each Hamiltonian family as a function of the scheduling exponent \(\alpha\) at fixed \(R=1\).}
  \label{tab:fisher_exponent_alpha}
\end{table}

\section{Fisher Trace Values for System-Size Sweep}
\label{app:fishertrace_qubits}

This appendix lists the numerical values of the trace of the classical Fisher information 
$\mathrm{Tr}\,\mathcal I$ (see Sec.~\ref{sec:fisher_diagnostics_theory}) as a function of the total experiment time $T_{\rm tot}$ for different 
system sizes \(n\), corresponding to the data shown in 
Fig.~\ref{fig:fisher_qubit_scaling}. The simulations use the XYZ Hamiltonian family with 
spread-state ensemble size \(R=1\) and scheduling exponent \(\alpha=1\).

\begin{table}[h!]
\centering
\begin{tabular}{@{}cccccc@{}}
\toprule
\(T_{\rm tot}\) & \(n=2\) & \(n=3\) & \(n=4\) & \(n=5\) & \(n=6\) \\
\midrule
0.03 & 0.01563 & 0.03542 & 0.07661 & 0.15617 & 0.31586 \\
0.10 & 0.09384 & 0.21262 & 0.45966 & 0.93651 & 1.89319 \\
0.21 & 0.28483 & 0.64517 & 1.39375 & 2.83791 & 5.73367 \\
0.36 & 0.63870 & 1.44660 & 3.12218 & 6.35294 & 12.82715 \\
0.55 & 1.20521 & 2.73009 & 5.88624 & 11.96797 & 24.14731 \\
\bottomrule
\end{tabular}

\vspace{1em}
\caption{Trace of the classical Fisher information $\mathrm{Tr}\,\mathcal I$ as a function of the 
total experiment time \(T_{\rm tot}\) for system sizes \(n=2\)–\(6\).}
\label{tab:fishertrace_qubits}
\end{table}

\section{Fisher-Matrix Diagonalization vs.\ Spread-State Ensemble Size}
\label{app:fisher_diag_R}

This appendix lists the numerical values of the Fisher-matrix diagonalization
measure \(\eta_{\mathrm{diag}}\) (see Sec.~\ref{sec:fisher_diagnostics_diagon}) as a function of the number of spread states \(R\),
corresponding to Fig.~\ref{fig:fisher_diag_eta}. The simulations use the XYZ
Hamiltonian family with total evolution time \(T_{\rm tot}=0.01\) and scheduling
exponent \(\alpha=1\).

\begin{table}[h!]
\centering
\begin{tabular}{@{}cccccc@{}}
\toprule
\(R\) & \(n=2\) & \(n=3\) & \(n=4\) & \(n=5\) & \(n=6\) \\
\midrule
1   & 0.65962 & 0.26696 & 0.14765 & 0.11116 & 0.16107 \\
2   & 0.65982 & 0.36614 & 0.28583 & 0.15573 & 0.12904 \\
4   & 0.86834 & 0.53498 & 0.39284 & 0.25370 & 0.19335 \\
8   & 0.88247 & 0.68136 & 0.53638 & 0.34291 & 0.26285 \\
16  & 0.90653 & 0.76999 & 0.68937 & 0.52652 & 0.38034 \\
32  & 0.93651 & 0.84513 & 0.79807 & 0.67375 & 0.50871 \\
64  & 0.94746 & 0.90253 & 0.87202 & 0.78486 & 0.67232 \\
128 & 0.96216 & 0.91517 & 0.91472 & 0.86922 & 0.79469 \\
\bottomrule
\end{tabular}

\vspace{1em}
\caption{Diagonalization measure \(\eta_{\mathrm{diag}}\) of the Fisher information
matrix as a function of the spread-state ensemble size \(R\) for system sizes
\(n=2\)–\(6\). Values close to \(1\) indicate that the Fisher matrix is largely
diagonal.}
\label{tab:fisher_diag_R}
\end{table}

\end{document}